\documentclass[12pt,reqno]{article}
\usepackage[Symbol]{upgreek}
\usepackage[T1]{fontenc}
\usepackage[latin9]{inputenc}
\usepackage[letterpaper]{geometry}
\geometry{tmargin=1in,bmargin=1in,lmargin=1in,rmargin=1in,footskip=1cm}
\pagestyle{plain}
\usepackage{color}
\usepackage{amsmath}
\usepackage{amsthm}
\usepackage{amssymb}
\usepackage{setspace}
\usepackage{natbib}
\onehalfspacing
\usepackage[unicode=true,pdfusetitle,
 bookmarks=true,bookmarksnumbered=false,bookmarksopen=false,
 breaklinks=false,pdfborder={0 0 0},pdfborderstyle={},backref=false,colorlinks=true]
 {hyperref}
\hypersetup{
 linkcolor=magenta,citecolor=blue}
 
\usepackage{graphicx}
\graphicspath{ {./graphs/} }
\usepackage{babel}
\providecommand{\tabularnewline}{\\}
\makeatother
\usepackage{array}
\usepackage{booktabs}
\usepackage{multirow}

\makeatletter

\usepackage{babel}

\makeatletter
\numberwithin{equation}{section}
\numberwithin{figure}{section}
\theoremstyle{remark}
\newtheorem{rem}{\protect\remarkname}[section]
\theoremstyle{definition}
\newtheorem{defn}{\protect\definitionname}[section]
\theoremstyle{plain}
\newtheorem{thm}{\protect\theoremname}[section]
\theoremstyle{plain}

\theoremstyle{definition}
\newtheorem{example}{\protect\examplename}[section]
\theoremstyle{definition}

\newtheorem{lem}{\protect\lemmaname}[section]

\@ifundefined{date}{}{\date{}}
\usepackage{amsmath}
\usepackage{graphicx,psfrag,epsf}
\usepackage{enumerate}
\usepackage{url} 

\usepackage{amsthm}
\usepackage{amsfonts}
\usepackage{amstext}

\usepackage{pdflscape}
\usepackage{threeparttable}

\setlength{\bibsep}{0.0pt}

\makeatother

\providecommand{\conditionname}{Condition}
\providecommand{\definitionname}{Definition}
\providecommand{\examplename}{Example}
\providecommand{\lemmaname}{Lemma}
\providecommand{\propositionname}{Proposition}
\providecommand{\remarkname}{Remark}
\providecommand{\theoremname}{Theorem}

\setlength {\marginparwidth }{2cm} 

\usepackage{todonotes}
\usepackage{verbatim}

\begin{document}
\sloppy

\newgeometry{tmargin=1in,bmargin=1in,lmargin=1.25in,rmargin=1.25in,footskip=1cm}

\title{Treatment Choice, Mean Square Regret and Partial Identification{\thanks{%
The authors gratefully acknowledge financial support from ERC grants (numbers 715940 for Kitagawa and 646917 for Lee), the
ESRC Centre for Microdata Methods and Practice (CeMMAP) (grant number RES-589-28-0001) and the NSF grant (number SES-2315600 for Qiu).}}}

\author{Toru Kitagawa\thanks{ Department of Economics, Brown University and Department of Economics, University College London. Email: toru\_kitagawa@brown.edu} \and Sokbae Lee\thanks{Department of Economics, Columbia University. Email: sl3841@columbia.edu} \and Chen Qiu\thanks{Department of Economics, Cornell University. Email: cq62@cornell.edu}}

\date{September 2023}
\maketitle

\begin{abstract}
We consider a decision maker who faces a binary treatment choice when their welfare is only partially identified from data. We contribute to the literature by anchoring our finite-sample analysis on mean square regret, a decision criterion advocated by \citet*{kitagawa2022treatment}. We find  that optimal rules are always fractional, irrespective of the width of the identified set and precision of its estimate. The optimal treatment fraction is a simple logistic transformation of the commonly used t-statistic multiplied by a factor calculated by a simple constrained optimization. This treatment fraction gets closer to 0.5 as the width of the identified set becomes wider, implying the decision maker becomes more cautious against the adversarial Nature.\\ 

\textsc{Keywords}: Statistical decision theory, treatment assignment rules, mean square regret, partial identification
\end{abstract}

\restoregeometry

\newpage{}

\onehalfspacing

\section{Introduction}

Evidence-based policy making has been a keyword  among researchers in social sciences and practitioners of public policies. A central question in evidence-based policy making is: how should a policy maker inform an optimal policy given information gathered from finite data? The seminal work of \cite{manski2004statistical} advocates to approach the question via the framework of a \emph{statistical treatment choice}, where the planner's policy choice is formulated based on the statistical decision theory of \cite{Wald50}.

Ultimately, the selection of an optimal policy depends on the criterion of the decision maker. In the literature of statistical treatment choice, a widely used notion is regret \citep{Savage51}, essentially the sub-optimality welfare gap between a policy under investigation and the oracle first-best policy. Furthermore, a common practice is to select optimal rules via minimax regret, which ranks decision rules via their worst-case expected regret over the  underlying state of nature governing the sampling distribution and causal effects of the policy.

In a setting with point-identified welfare, optimal decision rules based on minimax regret are often singleton rules (e.g., \citealt{stoye2009minimax} and \citealt{tetenov2012statistical}), i.e., they dictate to either treat everyone, or no one in the whole population given realized values of sample data. In a setting with partially-identified welfare, minimax regret optimal rules can be either singleton or non-singleton rules. See, for example, \cite{manski20092009,tetenov2009measuring,stoye2012minimax,yata2021}. Recently, in a point-identified case, \cite*{kitagawa2022treatment} found that singleton rules can be sensitive to the sampling uncertainty and may incur a high chance of large welfare loss (see \citealt{kitagawa2022treatment} for further analyses). As a result, \cite{kitagawa2022treatment} advocate the use of nonlinear regret to rank decision rules. 
For example, \cite{kitagawa2022treatment} recommend using mean square regret as a default, which penalizes rules with large variance of regret. This approach aligns with the choice of a decision maker who displays regret aversion, as axiomatized by \citet{hayashi2008regret}. In a binary treatment setup, \cite{kitagawa2022treatment} show that minimax optimal decision rules with mean square regret are always fractional and follow a simple form of a logistic transformation of the commonly used t-statistic for the welfare contrast.


The particular minimax optimal rules derived in  \cite{kitagawa2022treatment} focus on the case with point-identified welfare. That is, as finite sample data becomes large, the decision maker is able to learn the true welfare of each treatment and thus also to learn the true optimal treatment policy. While this assumption can be satisfied in many scenarios involving experimental data, there are still plenty of situations when such assumptions might be reasonably questioned. For example, even in randomized control trials (RCTs), outcome data under treatment or control might still be missing due to noncompliance of the sample units or due to attrition in the data-collecting process. Even without noncompliance or attrition and when the RCTs are internally valid, researchers may also be concerned about external validity, in the sense that the population for which the treatment policy is applied may be different from the population under which the RCTs are conducted. 

\begin{table}\label{tab:motivation}
\begin{centering}
\begin{tabular}{ccc}
\toprule 
Minimax optimal rule & mean regret & mean square regret\tabularnewline
\midrule
\multirow{1}{*}{point-identified welfare} & singleton & fractional \medskip \tabularnewline
\multirow{1}{*}{partially-identified welfare} & either singleton or fractional & \multirow{1}{*}{\textbf{aim of this paper}}\tabularnewline
\bottomrule
\end{tabular}
\par\end{centering}
\caption{Treatment choice with partial identification: existing results and aim of this paper}
\end{table}

What is the optimal treatment policy when a decision maker cares about mean square regret but faces the problem of a partially-identified welfare?  
Do the results of \cite{kitagawa2022treatment} that optimal rules are fractional remain to hold under partial identification? 
This paper aims to address these questions in a finite-sample framework, extending the analyses by  \cite{kitagawa2022treatment}. See Table \ref{tab:motivation} for an illustration of the motivation of the paper in relation with the existing results in the literature. Following earlier studies by \cite{Manski2000,brock2006profiling,Manski2007,tetenov2009measuring,stoye2012minimax}, among others, we adopt a simple, but well-motivated regret-based framework in which a policy maker, who wishes to maximize the expected outcome of the population, needs to choose a binary treatment when (1) the average treatment effect of the target population is partially identified, but (2) the identified set for the average treatment effect of the target population is a symmetric interval with a fixed and known length around the point-identified reduced-form parameter, for which a Gaussian sufficient statistic is available. Scenarios sharing both or either of the features have been studied by, e.g., \cite{tetenov2009measuring,stoye2012minimax,d2021policy, ishihara2021,yata2021,adjaho2022externally,ben2022policy,kido2022distributionally,christensen2023optimal}.

This paper contributes to the literature by developing new finite-sample optimal decision rules with mean square regret under partial identification, which has not been considered elsewhere in the literature to the best of our knowledge. We show that the fundamental form of the minimax optimal rules derived by \cite{kitagawa2022treatment} is preserved in the partial identification case. With partially identified welfare, minimax optimal rules have the following simple logistic form:
\begin{equation}\label{eq:intro.1}
\frac{\exp\left(2\cdotp a^{*}\cdotp\hat{t} \, \right)}{\exp\left(2\cdotp a^{*}\cdotp\hat{t} \, \right)+1},  
\end{equation}
where $\hat{t}$ is the t-statistic for the point-identified reduced-form parameter (say, the average treatment effect of the experimental population in the RCT), and $a^{\ast}\in(0,1.23)$ is the solution of a simple  constrained optimization problem that depends on the ratio of two key parameters: the width of the identified set $k$,  and the standard deviation $\sigma$ of the estimate of the identified set. In the absence of partial identification, $k=0$ and $a^{\ast}=1.23$, and (\ref{eq:intro.1}) becomes the rule derived by \cite{kitagawa2022treatment}. 

The form of rule (\ref{eq:intro.1}) is consistent with the findings by  \cite{kitagawa2022treatment}: minimax optimal rules with mean square regret are always fractional, irrespective of the magnitude of $k$ and $\sigma$. Moreover, $a^*$ is the center of the identified set under the least favorable prior, and  (\ref{eq:intro.1}) is the posterior probability, under that least favorable prior, that the treatment effect of the target population is positive. Due to partial identification, the location of $a^*$ needs to be calibrated in a case-by-case manner. We show that $a^{\ast}<1.23$, so that the treatment fraction given $\hat{t}>0$ is strictly smaller than that in a point-identified case. Therefore, a direct impact of partial identification on treatment choice is that it further disciplines the planner to be more cautious against the adversarial Nature. That is, optimal decision rules will allocate a larger fraction of the population to the opposite treatment, compared to the point-identified case. 

Our results draw a sharp contrast with the existing results by \cite{stoye2012minimax} and \cite{yata2021}, who derive minimax optimal rules under the same framework but with mean regret.  Firstly, their results show that optimal decision rules are fractional only when $k$ is large enough relative to $\sigma$. 
If $k$ is sufficiently small, minimax regret optimal rules are still singleton rules. With our mean square regret criterion, minimax optimal rules are always fractional. Secondly, if mean regret is the risk function,  whenever a fractional rule is optimal, the corresponding least favorable prior pins down the center of the identified set at a value of 0, i.e., under the least favorable prior, data is  uninformative regarding the sign of the treatment effect of the target population. In contrast, under mean square regret, the least favorable prior for the  center of the identified set supports two points symmetric around 0 so that the decision maker can update that prior with the data.

Due to the set-identified nature of the welfare and the nonlinear nature of the mean square regret, derivation of our results is more delicate than those considered in the existing literature. Indeed, the form of the optimal decision rule depends explicitly on the location of the least favorable prior, which will change depending on the ratio of $k$  and $\sigma$. Following \cite{donoho1994statistical} and \cite{yata2021}, we find our minimax optimal rule by searching for the hardest one-dimensional subproblem and verifying that the minimax optimal rule for the hardest one-dimensional subproblem is indeed minimax optimal for the whole problem. This approach is different from, but very much related to the  \emph{guess-and-verify} approach \citep[as exploited in][among others]{stoye2009minimax,stoye2012minimax,kitagawa2022treatment,azevedo2023b}. As we will demonstrate from Section \ref{sec:main} below, the approach by searching for the one-dimensional subproblem still has a ``guessing'' component as well as a ``verifying'' component. In fact, one may view finding the hardest one-dimensional subproblem as one specific way of figuring out the least favorable prior. Technically, in our considered problem,  one can still try to figure out the structure of the least favorable prior based on prior work (e.g., \citealt{stoye2012minimax}) without using the techniques employed in this paper. Hence, it is not entirely clear which approach has a clear advantage in solving these minimax problems. It is beyond the scope of this paper to investigate optimal rules with mean square regret under the multivariate-signal setting considered by \cite{yata2021}, but we conjecture that similar analyses in this paper may be extended.


Our research is related to a rapidly growing literature on treatment choice with partially identified welfare.  It is known that minimax regret optimal rules may be fractional with or without true knowledge of the identified set \citep{Manski2000,manski2002treatment,manski2005social,brock2006profiling,manski2007identification,Manski2007,tetenov2009measuring,stoye2009partial,stoye2012minimax,cassidy2019tuberculosis,manski2013public,manski2021probabilistic,yata2021}. Fractional rules also arise in a setting with point-identified but nonlinear welfare  \citep{manski2007admissible,manski20092009}. Our results focus on a scenario when the policy maker cannot differentiate each individual in the population. There is also a large literature on individualized policy learning with concerns on partially identified welfare, including issues like distributional robustness, external validity or asymmetric welfare, by, e.g., 
\cite{kallus2018confounding,d2021policy, ishihara2021,adjaho2022externally,ben2021safe,ben2022policy,kido2022distributionally,christensen2023optimal,lei2023policy}. When welfare is point-identified, finite-sample optimal rules are derived in \cite{schlag2006eleven,stoye2009minimax,HiranoPorter2009,tetenov2012statistical,HiranoPorter2020}. Individualised treatment choice with point-identified welfare is considered in \cite{manski2004statistical,BhattacharyaDupas2012,kitagawa2018should,KT21,MT17,AW17}, among others.

The rest of the paper is organised as follows. Section \ref{sec:setup} introduces our setup. Section \ref{sec:main} presents steps to derive our new minimax mean square regret optimal rules via finding the hardest one-dimensional subproblem. Section \ref{sec:conclusion} concludes.

\section{Setup}\label{sec:setup}

Our analysis begins with the basic framework of optimal treatment choice with partially identified welfare and with finite-sample data (see also \citealt{Manski2000,brock2006profiling,Manski2007,manski20092009,tetenov2009measuring,stoye2012minimax} for earlier investigations).
A decision maker contemplates assigning a binary treatment $D\in\{0,1\}$
to an infinitely large population which we call \textit{target} population.
Let $Y_{t}(1)$ be the potential outcome of the target population
when $D=1$ (treatment), and $Y_{t}(0)$ be the potential outcome
of the target population when $D=0$ (control). Denote by $P_{t}\in\mathcal{P}$
the joint distribution of $\left\{ Y_{t}(1),Y_{t}(0)\right\}$. We
assume that a planner aims to maximize the mean outcome of the target population. Define the average treatment
effect of the target population as $\theta_{t}:=\mathbb{E}_{t}\left[Y_{t}(1)-Y_{t}(0)\right]$,
where $\mathbb{E}_{t}[\cdotp]$ denotes the expectation with respect
to $P_{t}$. Then, it is easy to see that the infeasible optimal treatment
policy for the target population is 
\[
\mathbf{1}\left\{ \theta_{t}\geq0\right\} .
\]

To learn about the unknown parameter $\theta_{t}\in\mathbb{R}$, the
decision maker has access to finite data collected from some
RCTs. However, we assume that the RCTs
are implemented on a population, which we call \textit{experimental}
population, that is potentially different from the target population.
That is, the decision maker is concerned about the external validity
of the RCT: the data only has limited validity and the RCTs only partially
identify the true parameter of interest $\theta_{t}$. To derive finite
sample optimality results, we assume that the RCTs have internal validity
so that the decision maker is able to derive a normally distributed estimator $\hat{\theta}_{e}\in\mathbb{R}$
for the average treatment effect of the experimental population. That is,
\[
\hat{\theta}_{e}\sim N(\theta_{e},\sigma^{2}),
\]
where $\theta_{e}\in\mathbb{R}$ is the unknown average treatment
effect of the experimental population, and $\sigma^{2}>0$ is known. Note $\theta_{e}$ is the point-identified reduced-form parameter. 
And $\theta_{e}$ is potentially different
from $\theta_{t}$, which is the parameter of interest that the decision
maker really cares about. Without any assumptions on the relationship
between $\theta_{e}$ and $\theta_{t}$, the problem becomes trivial,
as $\theta_{e}$ and $\theta_{t}$ can be arbitrarily different so
that nothing can be learnt from the RCTs about $\theta_{t}$. In that
sense, data is completely useless. The potential usefulness
of data in revealing the true unknown $\theta_{t}$ lies in the following
key assumption: for each $\theta_{e}\in\mathbb{R}$, the decision
maker knows a priori that the difference between $\theta_{t}$ and
$\theta_{e}$ can be at most $k\in\mathbb{R}$, a known constant. That is, the identified
set for $\theta_{t}$ is:
\begin{equation}
I(\theta_{e}):=[\theta_{e}-k,\theta_{e}+k], \forall\theta_{e}\in \mathbb{R},\label{eq:id.set-2}
\end{equation}
with $k>0$ known. Note the case of $k=0$ corresponds to the
point-identified case in which $\theta_{t}$ and $\theta_{e}$ coincide.
The case of $k=\infty$ corresponds to the case when RCT data is completely
uninformative about the true $\theta_{t}$. 
\begin{rem}
The shape of the identified set $I(\theta_{e})$ in (\ref{eq:id.set-2}) is a symmetric interval around $\theta_{e}$.  Moreover, the upper and lower bounds of $I(\theta_{e})$ are both affine in $\theta_{e}$ with the same gradient. Such a nice structure facilitates finite-sample analysis and arises in many problems, including the missing data \citep{manski1989anatomy}, extrapolation under a Lipshitz assumption \citep{stoye2012minimax,ishihara2021,yata2021}, and welfare bounds with externally invalid experimetnal population \citep{adjaho2022externally,kido2022distributionally}. However, there are also many situations when $I(\theta_{e})$ does not have the nice form in (\ref{eq:id.set-2}). Deriving finite-sample results will be more challenging and is beyond the scope of this paper, and we leave them for future research.   
\end{rem}

The decision maker needs to choose a  statistical treatment rule that maps the empirical evidence summarized by $\hat{\theta}_{e} \in \mathbb{R}$
to the unit interval: 
\[
\hat{\delta}:\mathbb{R}\rightarrow[0,1],
\]
where $\hat{\delta}(x)$ is the fraction of the target population
to be treated after the policy maker observes $\hat{\theta}_{e}=x$.
Note we assume that the primitive action space for the planner is
$[0,1]$. That is, fractional treatment allocation according to some randomization device
is allowed after data have been observed.

We deviate from the existing literature in treatment choice by evaluating
the performance of $\hat{\delta}$ via mean square regret, a decision criterion advocated by \cite{kitagawa2022treatment} as a special case of nonlinear regret. In a setting with point-identified welfare and with finite-sample data, \cite{kitagawa2022treatment} observe that optimal rules are usually singleton rules and are sensitive to the sampling uncertainty. To alleviate concerns regarding the robustness of optimal decision rules with respect to sampling uncertainty, \cite{kitagawa2022treatment} advocate the criteria of nonlinear regret, which incorporates other useful information from the regret distribution (e.g., the second or higher moments), while the standard regret criterion only focuses on the mean of the regret distribution. In particular, mean square regret criterion penalizes rules with large variance of regret, and yields optimal treatment fractions with a simple formula. From the perspective of decision theory, mean square regret also characterizes the choice behaviour of a decision maker who displays regret aversion, a notion axiomatized by \cite{hayashi2008regret}. A natural open question is how the optimal rules will change under the mean square regret criterion if the welfare is now partially identified, which we address in this paper. To proceed, note that applying $\hat{\delta}$ to the target population
yields a welfare of
\begin{align*}
W(\hat{\delta},P_{t}) & :=\hat{\delta}\mathbb{E}_{t}\left[Y_{t}(1)\right]+(1-\hat{\delta})\mathbb{E}_{t}\left[Y_{t}(0)\right]
\end{align*}
and a regret of 
\[
Reg(\hat{\delta},P_{t}):=W(\mathbf{1}\left\{ \theta_{t}\geq0\right\} ,P_{t})-W(\hat{\delta},P_{t})=\theta_{t}\left\{ \mathbf{1}\{\theta_{t}\geq0\}-\hat{\delta}\right\} 
\]
to the planner. The mean square regret of $\hat{\delta}$ is defined
as
\[
R_{sq}(\hat{\delta},\theta_{e},P_{t}):=\mathbb{E}_{\theta_{e}}\left[Reg^{2}(\hat{\delta},P_{t})\right],
\]
where $\mathbb{E}_{\theta_{e}}[\cdotp]$ is with respect to RCT data
$\hat{\theta}_{e}\sim N(\theta_{e},\sigma^{2})$. As $Reg(\hat{\delta},P_{t})$
depends on $P_{t}$ only through $\theta_{t}$, we can simplify $R_{sq}(\hat{\delta},\theta_{e},P_{t})$
as
\[
R_{sq}(\hat{\delta},\theta):=\theta_{t}^{2}\mathbb{E}_{\theta_{e}}\left[\left(\mathbf{1}\{\theta_{t}\geq0\}-\hat{\delta}\right)^{2}\right],
\]
where $\theta:=\left(\begin{array}{c}
\theta_{e}\\
\theta_{t}
\end{array}\right)\in\Theta\subseteq\mathbb{R}^2$ are the unknown parameters in the problem, and
\[
\Theta:=\left\{ (\theta_{e},\theta_{t})^{\prime}\in\mathbb{R}^{2}|\theta_{e}\in\mathbb{R},\theta_{t}\in I(\theta_{e})\right\} 
\]
is the associated parameter space. 

\section{Minimax optimal rules}\label{sec:main}

We aim to find a minimax optimal rule in terms of mean square regret.
Viewing $R_{sq}(\hat{\delta},\theta)$ as the risk function in statistical
decision theory, we introduce the following standard definition of
minimax optimality. 
\begin{defn}
Let $\mathcal{D}$ be a set of statistical decision rules that are functions of $\hat{\theta}_{e}$.
A rule $\hat{\delta}^{*}$ is mean square regret minimax optimal if it is such that
\[
\sup_{\theta\in\Theta}R_{sq}(\hat{\delta}^{*},\theta)=\min_{\hat{\delta}\in\mathcal{D}}\sup_{\theta\in\Theta}R_{sq}(\hat{\delta},\theta).
\]
\end{defn}
Since $\theta\in\Theta$ is a two-dimensional parameter, finding a
minimax optimal rule is more challenging than in a point-identified
case, which can be  viewed as a special case when $\theta_{e}=\theta_{t}$
and the unknown parameter is one-dimensional. That said, note the standard \emph{guess-and-verify} approach \citep[Proposition 4.2, ][]{kitagawa2022treatment} is still valid. In theory, we can still try to figure out a least favorable prior in $\mathbb{R}^2$ and show the Bayes optimal rule with respect to that hypothetical least favorable  prior, say $\hat{\delta}_{\pi}$, is such that
\[
r(\hat{\delta}_{\pi})=\sup_{\theta\in\Theta} R_{sq}(\hat{\delta}_{\pi},\theta),
\]
where $r(\hat{\delta}_{\pi})$ is the Bayes mean square regret of $\hat{\delta}_{\pi}$ under the hypothetical least favorable prior. Here, we take a different, but related approach that was adopted by \cite{yata2021}, who follows \cite{donoho1994statistical} to find a minimax optimal rule by searching for a hardest
one-dimensional subproblem. 
We discuss the connections between these two approaches in Section \ref{sec:subproblem} and Remark \ref{rem:connection.with.donoho}.

Below, we present the core results of this paper. We first
review and extend some existing results in the one-dimensional problem, which
will be useful for the derivation of the minimax optimal rule in one-dimensional subproblem and also for our two-dimensional
problem. 

\subsection{Review of the existing results in one-dimensional problem}\label{sec:review.one.dimension}
\begin{example} [Stylized one-dimensional problem] \label{eg:1}Let $\bar{Y}_{1}\sim N(\tau,1)$
be normally distributed with an unknown mean $\tau\in[-c,c]$ for
some $0<c<\infty$, and a known variance normalized to one, with the
likelihood function 
\begin{equation}
f(\bar{y}_{1}|\tau)=\phi(\bar{y}_{1}-\tau),\forall\bar{y}_{1}\in\mathbb{R},\label{eq:pdf.normal}
\end{equation}
where $\phi(x)$ is the pdf of a standard normal distribution. The
mean square regret of a rule $\hat{\delta}:\mathbb{R}\rightarrow[0,1]$ based on data $\bar{Y}_{1}$
is 
\[
R_{sq}(\hat{\delta},\tau)=\tau^{2}\mathbb{E}\left[\left(\mathbf{1}\{\tau\geq0\}-\hat{\delta}(\bar{Y}_{1})\right)^{2}\right],
\]
where the expectation $\mathbb{E}[\cdotp]$ is with respect to $\bar{Y}_{1}\sim N(\tau,1)$.
\end{example}
\citet[Example 4.1, ][]{kitagawa2022treatment} focus on the general result when $c=\infty$.
The following lemma extends the result of \cite{kitagawa2022treatment} by
allowing $c$ to be bounded and sufficiently small. Let $\rho(a):=\mathbb{E}\left[\left(\frac{1}{\exp\left(2a\bar{Y}_{1}\right)+1}\right)^{2}\right]$,
where the expectation $\mathbb{E}[\cdotp]$ is with respect to $\bar{Y}_{1}\sim N(a,1)$. 
\begin{lem}[Mean square regret minimax rule in a stylized one-dimensional problem]
\label{lem:main.1}In terms of mean square regret, a minimax optimal
rule in Example \ref{eg:1} is
\[
\hat{\delta}^{*}=\begin{cases}
\frac{\exp\left(2\cdotp\tau^{*}\cdotp\bar{Y}_{1}\right)}{\exp\left(2\cdotp\tau^{*}\cdotp\bar{Y}\right)+1}, & \text{if }c\geq\tau^{*},\\
\frac{\exp\left(2\cdotp c\cdotp\bar{Y}_{1}\right)}{\exp\left(2\cdotp c\cdotp\bar{Y}_{1}\right)+1}, & \text{if }c<\tau^{*},
\end{cases}
\]
where $\tau^{*}\approx1.23$ solves $\sup\limits _{\tau\in[0,\infty)}\tau^{2}\rho(\tau)$.
Moreover, the worst-case mean square regret of $\hat{\delta}^{*}$
is
\[
R_{sq}^{*}:=\sup_{\tau\in[-c,c]}R_{sq}(\hat{\delta}^{*},\tau)=\begin{cases}
\left(\tau^{*}\right)^{2}\rho(\tau^{*})\approx0.12, & \text{if }c\geq\tau^{*},\\
c^{2}\rho(c) & \text{if }c<\tau^{*}.
\end{cases}
\]
\end{lem}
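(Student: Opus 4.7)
The plan is to apply the guess-and-verify strategy via Bayes--minimax duality (Proposition 4.2 of \cite{kitagawa2022treatment}): propose a candidate least favorable prior $\pi$, compute its Bayes rule $\hat{\delta}_\pi$, and verify that the worst-case risk of $\hat{\delta}_\pi$ matches the Bayes risk under $\pi$. Exploiting the symmetry $\tau \leftrightarrow -\tau$ of Example~\ref{eg:1}, I would restrict attention to symmetric two-point priors $\pi_a$ placing mass $1/2$ on each of $\pm a$, with $a\in[0,c]$ so the support lies in $[-c,c]$; the parameter $a$ is then chosen to maximize the resulting Bayes risk.

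For each such $\pi_a$, pointwise minimization of the posterior expected squared regret yields the logistic Bayes rule $\hat{\delta}_a(y) = \phi(y-a)/[\phi(y-a)+\phi(y+a)] = \exp(2ay)/[\exp(2ay)+1]$. The identity $\hat{\delta}_a(-y) = 1-\hat{\delta}_a(y)$ implies $V_a(\tau) := R_{sq}(\hat{\delta}_a,\tau)$ is even in $\tau$, so the Bayes risk under $\pi_a$ equals $V_a(a) = a^2\rho(a)$. Maximizing $a^2\rho(a)$ over $a\in[0,c]$ then gives $a^* = \tau^*$ in the interior case $c\geq\tau^*$ (by definition of $\tau^*$) and $a^* = c$ in the corner case $c<\tau^*$ (using that $\tau^2\rho(\tau)$ is strictly increasing on $[0,\tau^*)$, inherited from uniqueness of the critical point in the unconstrained setting). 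This identifies the candidate minimax rule in both branches of the lemma.

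The heart of the argument is verifying $\sup_{\tau\in[-c,c]} V_{a^*}(\tau) = a^{*2}\rho(a^*)$, which by evenness of $V_{a^*}$ reduces to $\max_{\tau\in[0,c]} V_{a^*}(\tau) = V_{a^*}(a^*)$. The key device is an envelope identity: since $\hat{\delta}_a$ is the Bayes rule at each $a$, differentiating $a^2\rho(a) = \tfrac{1}{2}[V_{\hat{\delta}_a}(a) + V_{\hat{\delta}_a}(-a)]$ in $a$ cancels the indirect effect through the rule, and combined with evenness of $V_{\hat{\delta}_a}$ yields $(a^2\rho(a))' = \partial_\tau V_a(\tau)|_{\tau=a}$. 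In the interior case $c\geq\tau^*$, the first-order condition at $\tau^*$ thus gives $\partial_\tau V_{\tau^*}(\tau)|_{\tau=\tau^*} = 0$, and invoking the $c=\infty$ analysis of \cite{kitagawa2022treatment} (which identifies $\tau^*$ as the unique maximizer of $V_{\tau^*}$ on $[0,\infty)$) yields the desired equality on the smaller domain $[0,c]$. In the corner case $a^* = c < \tau^*$, the same identity at $a = c$ gives $\partial_\tau V_c(\tau)|_{\tau=c} > 0$; combined with unimodality of $V_c$ on $[0,\infty)$ with its peak strictly to the right of $c$, $V_c$ is strictly increasing throughout $[0,c]$ and is therefore maximized at the boundary $\tau = c$.

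The main technical hurdle is the corner case $c<\tau^*$, which is the genuinely new content beyond \cite{kitagawa2022treatment}. Establishing that $V_c$ is increasing on $[0,c]$ requires combining the envelope-based first-order information at $\tau = c$ with the global shape of $V_c$ on $[0,\infty)$; once that is in place, the Bayes--minimax duality closes the argument and recovers the closed-form worst-case risk stated in the lemma.
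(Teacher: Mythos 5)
Your overall skeleton matches the paper's: restrict to symmetric two-point priors, compute the logistic Bayes rule and its Bayes risk $a^{2}\rho(a)$, and close with Bayes--minimax duality (\citealt{kitagawa2022treatment}, Proposition 4.2) once the worst-case risk of the candidate is shown to equal its Bayes risk. The one place you genuinely diverge is the first-order information at the corner: you derive $\partial_{\tau}V_{c}(\tau)|_{\tau=c}>0$ from an envelope identity $(a^{2}\rho(a))'=\partial_{\tau}V_{a}(\tau)|_{\tau=a}$ together with monotonicity of $a\mapsto a^{2}\rho(a)$ on $[0,\tau^{*})$, whereas the paper (Lemma \ref{lem:8}, Step 2) obtains $(g_{c}^{*})^{(1)}(c)\geq 0$ by contradiction, via a comparison lemma (Lemma \ref{lem:9}) relating the risk profiles of the rules indexed by $c$ and by $\tau^{*}$ and invoking global minimaxity of the unconstrained rule. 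Your envelope identity is correct as a formal computation (the pointwise Bayes minimizer is interior and smooth, so the indirect term vanishes), and it is arguably a cleaner way to see why the corner prior sits at $c$.

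However, two analytic facts on which your verification hinges are asserted rather than proved, and neither is free. First, strict monotonicity of $a^{2}\rho(a)$ on $[0,\tau^{*})$ does not follow from ``uniqueness of the critical point in the unconstrained setting'': uniqueness of the \emph{maximizer} $\tau^{*}$ is compatible with additional critical points or non-monotone behavior on $(0,\tau^{*})$, and the paper nowhere establishes (nor needs) this monotonicity --- its contradiction argument via Lemma \ref{lem:9} is designed precisely to avoid it. Since your corner-case argument collapses if $(a^{2}\rho(a))'|_{a=c}\leq 0$ for some $c<\tau^{*}$, this is a real gap, not a presentational one. Second, the unimodality of $\tau\mapsto V_{c}(\tau)$ on $[0,\infty)$ is the main technical content of the paper's Lemma \ref{lem:8} (Step 1); it is proved there by a sign-change/variation-diminishing argument for the Gaussian kernel (Theorem C.1 of \citealt{kitagawa2022treatment}) applied after two changes of variable, and it is not inherited from the $c=\infty$ case because $V_{c}$ is the risk profile of a \emph{different} rule than the one analyzed there. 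You correctly identify this as the main hurdle, but the proposal supplies no argument for it, so as written the verification step --- and hence the worst-case risk formula $c^{2}\rho(c)$ --- is not established.
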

\begin{proof}
See Appendix \ref{sec:app.a}.
\end{proof}

\begin{rem}\label{rem:stylized}
The result of Lemma \ref{lem:main.1} implies that  when $c\geq \tau^{\ast}$, minimax optimal decision rule is the same as the one found in \citet[Theorem 4.2, ][]{kitagawa2022treatment}, while the optimal rule differs when $c< \tau^{\ast}$. This result is very intuitive. We know that a global least favorable prior (when $c$ is allowed to be as large as we want) puts equal probabilities on $\tau^{\ast}$ and $-\tau^{\ast}$. If  $c\geq \tau^{\ast}$, the global least favorable prior is always feasible, so the minimax optimal rule must remain the same. If $c< \tau^{\ast}$, the global least favorable prior is no longer feasible. Instead, Lemma \ref{lem:main.1} shows that the constrained least favorable prior  when $c< \tau^{\ast}$ puts equal probabilities on the boundary points $c$ and $-c$, and the minimax optimal rule is the Bayes optimal rule with respect to that constrained least favorable prior.
\end{rem}
\subsection{One-dimensional subproblem}\label{sec:subproblem}


In this and next subsections, we explain in detail how to derive a minimax optimal rule under mean square regret by using the approach taken by \cite{donoho1994statistical} and \cite{yata2021}. The key idea is to find a one-dimensional subproblem (which we know how to solve from results in Section \ref{sec:review.one.dimension}) that is as difficult as the original two-dimensional problem. In this particular example, as the parameter space $\Theta\subseteq \mathbb{R}^2$ is symmetric, it is natural to consider a one-dimensional subproblem in which the parameter space is simply the line connecting two symmetric points around $(0,0)^\prime$ in $\Theta$ (to be formally introduced below). For such one-dimensional subproblem,  we can use Lemma \ref{lem:main.1} to find its minimax optimal rule and the associated worst-case mean square regret. Then, we search among all such one-dimensional subproblems. The one with the largest worst-case mean square regret is our hardest one-dimensional subproblem, and its associated minimax rule is our ``guess'' of the minimax optimal rule for the original two-dimensional problem. A final crucial step is to verify that this candidate minimax rule derived from the hardest one-dimensional subproblem is indeed a minimax rule of the original problem---this corresponds to the ``verifying'' step. Therefore, the approach taken  by \cite{donoho1994statistical} and \cite{yata2021} still has a ``guessing'' component and a ``verifying'' component, and is very much related to the \emph{guess-and-verify} approach that focuses on finding a least favorable prior (exploited in, e.g., \citealt{stoye2009minimax,stoye2012minimax,kitagawa2022treatment,azevedo2023b}). We further discuss the connections between the two approaches in Remark \ref{rem:connection.with.donoho}.

To be more concrete, a one-dimensional subproblem embedded in the two-dimensional problem
can be constructed as follows. Let $a_{e}\geq0$ and $a_{t}\in I(a_{e})$
be two known constants. It follows then $\left(\begin{array}{c}
a_{e}\\
a_{t}
\end{array}\right)\in\Theta$ and $\left(\begin{array}{c}
-a_{e}\\
-a_{t}
\end{array}\right)\in\Theta$. Let 
\[
\Theta_{a_{e},a_{t}}:=\left\{ \theta\in\mathbb{R}^{2}|\theta=s\left(\begin{array}{c}
a_{e}\\
a_{t}
\end{array}\right),s\in[-1,1]\right\} \subseteq\Theta
\]
be the line connecting $\left(\begin{array}{c}
a_{e}\\
a_{t}
\end{array}\right)$ and $\left(\begin{array}{c}
-a_{e}\\
-a_{t}
\end{array}\right)$. The parameter space $\Theta_{a_{e},a_{t}}$ is one-dimensional as
it contains only one unknown parameter $s\in[-1,1]$. We call the
problem of finding a minimax optimal rule when $\theta\in\Theta_{a_{e},a_{t}}$
a \emph{one-dimensional subproblem}. 
Indeed, for intuition, suppose $a_{e}>0$ and let $\hat{s}:=\frac{\hat{\theta}_{e}}{a_{e}}$. 
Simple algebra shows that
\[
\hat{s}\sim N\left(s,\frac{\sigma^{2}}{a_{e}^{2}}\right),
\]
which further implies that
\[
a_{t}\hat{s}=\frac{a_{t}}{a_{e}}\hat{\theta}_{e}\sim N\left(sa_{t},\left(\frac{a_{t}}{a_{e}}\right)^{2}\sigma^{2}\right).
\]
That is, $a_{t}\hat{s}$ is normally distributed with an unknown mean 
$sa_{t}$ (since $s$ is unknown) and  with a known
variance $\left(\frac{a_{t}}{a_{e}}\right)^{2}\sigma^{2}$. Note that $sa_{t}$ is the average treatment effect of the target population. We may then apply
Lemma \ref{lem:main.1} to characterize a minimax optimal rule for the one-dimensional
subproblem. The case when $\theta_{e}=0$, in contrast, requires a separate consideration, as this corresponds to the case when data $\hat{\theta}_{e}\sim N(0,\sigma^2)$ reveals no information regarding $s$. See Remark \ref{rem:subproblem} for further discussions. Considering both  cases when $\theta_{e}>0$ and $\theta_{e}=0$, we have the following lemma. 
\begin{lem}[Mean square regret minimax rule of a one-dimensional subproblem]
\label{lem:main.2}A minimax optimal rule for the one-dimensional
subproblem is
\[
\hat{\delta}_{a_{e},a_{t}}^{*}=\begin{cases}
\frac{\exp\left(2\cdotp\tau^{*}\cdotp\frac{a_{t}}{\left|a_{t}\right|\sigma}\hat{\theta}_{e}\right)}{\exp\left(2\cdotp\tau^{*}\cdotp\frac{a_{t}}{\left|a_{t}\right|\sigma}\hat{\theta}_{e}\right)+1}, & \frac{a_{e}}{\sigma}\geq\tau^{*},\\
\\
\frac{\exp\left(2\cdotp\frac{a_{e}}{\sigma}\frac{a_{t}}{\left|a_{t}\right|\sigma}\hat{\theta}_{e}\right)}{\exp\left(2\cdotp\frac{a_{e}}{\sigma}\frac{a_{t}}{\left|a_{t}\right|\sigma}\hat{\theta}_{e}\right)+1}, & 0\leq\frac{a_{e}}{\sigma}<\tau^{*}.
\end{cases}
\]
That is, 
\[
\sup_{\theta\in\Theta_{a_{e},a_{t}}}R_{sq}(\hat{\delta}_{a_{e},a_{t}}^{*},\theta)=\min_{\hat{\delta}\in\mathcal{D}}\sup_{\theta\in\Theta_{a_{e},a_{t}}}R_{sq}(\hat{\delta},\theta).
\]
Moreover, the worst-case mean square regret of $\hat{\delta}_{a_{e},a_{t}}^{*}$
is 
\[
\sup_{\theta\in\Theta_{a_{e},a_{t}}}R_{sq}(\hat{\delta}_{a_{e},a_{t}}^{*},\theta)=\begin{cases}
\frac{a_{t}^{2}\sigma^{2}}{a_{e}^{2}}\left(\tau^{*}\right)^{2}\rho(\tau^{*}), & \frac{a_{e}}{\sigma}\geq\tau^{*},\\
a_{t}^{2}\rho\left(\frac{a_{e}}{\sigma}\right), & 0\leq\frac{a_{e}}{\sigma}<\tau^{*}.
\end{cases}
\]
\end{lem}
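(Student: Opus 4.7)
The plan is to reduce the one-dimensional subproblem to the stylized normal-mean problem of Example~\ref{eg:1} via a deterministic change of variables and then invoke Lemma~\ref{lem:main.1}. Because $\Theta_{a_{e},a_{t}}$ is parametrized by a single scalar $s\in[-1,1]$, this reduction turns the subproblem into an instance of Example~\ref{eg:1} with explicit bound $c=a_{e}/\sigma$, up to a positive multiplicative constant that leaves the minimax optimizer unchanged.

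First I would take $a_{e}>0$ and $a_{t}\ne 0$ and set
\[
\bar{Y}_{1}\;:=\;\frac{a_{t}}{|a_{t}|\sigma}\,\hat{\theta}_{e},\qquad \tau\;:=\;\frac{a_{t}}{|a_{t}|}\cdot\frac{a_{e}}{\sigma}\,s.
\]
A direct computation gives $\bar{Y}_{1}\sim N(\tau,1)$, and as $s$ ranges over $[-1,1]$ the reparametrized $\tau$ ranges over $[-a_{e}/\sigma,\,a_{e}/\sigma]$. Using $a_{t}^{2}=|a_{t}|^{2}$, one checks that $s a_{t}=(|a_{t}|\sigma/a_{e})\,\tau$, so $(sa_{t})^{2}=(a_{t}^{2}\sigma^{2}/a_{e}^{2})\,\tau^{2}$ and $\operatorname{sign}(sa_{t})=\operatorname{sign}(\tau)$, whence $\mathbf{1}\{sa_{t}\ge 0\}=\mathbf{1}\{\tau\ge 0\}$ almost surely. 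Since $\hat{\theta}_{e}\mapsto\bar{Y}_{1}$ is a measurable bijection, every rule $\hat{\delta}(\hat{\theta}_{e})$ corresponds uniquely to a rule $\tilde{\delta}(\bar{Y}_{1})$, and the subproblem risk factors as
\[
R_{sq}(\hat{\delta},(sa_{e},sa_{t})')\;=\;\frac{a_{t}^{2}\sigma^{2}}{a_{e}^{2}}\,\tau^{2}\,\mathbb{E}\!\left[(\mathbf{1}\{\tau\ge 0\}-\tilde{\delta}(\bar{Y}_{1}))^{2}\right].
\]
The bracketed expression is exactly the risk functional of Example~\ref{eg:1} with constraint $\tau\in[-c,c]$ for $c=a_{e}/\sigma$, and the positive factor $a_{t}^{2}\sigma^{2}/a_{e}^{2}$ affects neither the argmin over rules nor the argmax over parameters. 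Applying Lemma~\ref{lem:main.1} to $\tilde{\delta}$ in both regimes $c\ge\tau^{*}$ and $c<\tau^{*}$, and then pulling $\tilde{\delta}^{*}$ back to a function of $\hat{\theta}_{e}$ through $\bar{Y}_{1}=\frac{a_{t}}{|a_{t}|\sigma}\hat{\theta}_{e}$, yields the two displayed expressions for $\hat{\delta}^{*}_{a_{e},a_{t}}$; multiplying the bracketed worst-case risk by $a_{t}^{2}\sigma^{2}/a_{e}^{2}$ reproduces the claimed worst-case mean square regret, with the second branch simplifying to $a_{t}^{2}\rho(a_{e}/\sigma)$ via $c^{2}=a_{e}^{2}/\sigma^{2}$.

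The boundary case $a_{e}=0$ must be handled separately, because the standardization above divides by $a_{e}$. Here $\hat{\theta}_{e}\sim N(0,\sigma^{2})$ is ancillary for $s$, so the risk of any rule $\hat{\delta}$ depends only on $\mathbb{E}[\hat{\delta}]$ and $\mathbb{E}[\hat{\delta}^{2}]$; a short minimax calculation restricted to $s\in\{-1,1\}$ then shows that $\hat{\delta}\equiv 1/2$ is optimal with worst-case regret $a_{t}^{2}/4$, which matches the second branch of the formula at $a_{e}/\sigma=0$ because the logistic expression collapses to $1/2$ and $\rho(0)=1/4$. The main obstacle, beyond routine bookkeeping, is the careful handling of $\operatorname{sign}(a_{t})$: the factor $a_{t}/|a_{t}|$ is inserted into $\bar{Y}_{1}$ precisely so that $\operatorname{sign}(\tau)$ tracks $\operatorname{sign}(sa_{t})$, which aligns the optimal-action boundary in the subproblem with that in Example~\ref{eg:1} and makes the scalar reduction work uniformly in the sign of $a_{t}$.
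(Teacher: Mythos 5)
Your proposal is correct and follows essentially the same route as the paper's proof: a change of variables reducing the subproblem to Example \ref{eg:1} with $c=a_{e}/\sigma$, an appeal to Lemma \ref{lem:main.1}, and a separate direct mean--variance argument for the uninformative case $a_{e}=0$. The only cosmetic difference is that you standardize in a single step using the sign factor $a_{t}/|a_{t}|$, whereas the paper first passes to the sufficient statistic $\frac{a_{t}}{a_{e}}\hat{\theta}_{e}$ and then rescales.
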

\begin{proof}
See Appendix \ref{sec:app.a}.
\end{proof}

\begin{rem}
The interpretation of the minimax optimal rule in the one-dimensional subproblem is as follows. Intuitively, note as long as $a_{e}\neq0$, $\frac{a_{t}}{\left|a_{t}\right|\sigma}\hat{\theta}_{e}:=\hat{t}$ is a standard t-statistic. Consistent with the conclusion from \cite{kitagawa2022treatment}, a minimax optimal rule in this parametric problem is a logistic transformation of $\hat{t}$. If $\frac{a_{e}}{\sigma}\geq\tau^*$, then the minimax optimal rule is a logistic transformation of $2\tau^*\hat{t}$. If, in contrast, $0<\frac{a_{e}}{\sigma}<\tau^*$, then the minimax optimal rule is a logistic transformation of $2\frac{a_{e}}{\sigma}\hat{t}$. As we can see, if $\hat{t}>0$, the treatment fraction when $0<\frac{a_{e}}{\sigma}<\tau^*$ is smaller than the case when $\frac{a_{e}}{\sigma}\geq\tau^*$. Such a structure has intuitive implications on the minimax optimal rule derived later. See Remark \ref{rem:cautious} for a further discussion.
\end{rem}

\begin{rem}\label{rem:subproblem}
The situation when $a_{e}=0$ is particularly interesting and demonstrates further difference between the criterion of mean square regret and that of mean regret. If it holds $a_{e}=0$, then $\hat{\theta}_{e}\sim N(0,\sigma^{2})$. That is, data is completely uninformative and  reveals no information
regarding the unknown $s$. In this situation, $\theta_{t}\in[-|a_{t}|,|a_{t}|]$. This subproblem coincides with what was analyzed by \cite{manski2007identification}. If the mean of the regret is the criterion, \cite{manski2007identification} shows that any rule $\hat{\delta}$ such that $\mathbb{E}[\hat{\delta}(\hat{\theta}_{e})]=\frac{1}{2}$ is a minimax optimal rule, where the expectation is with respect to $\hat{\theta}_{e}\sim N(0,\sigma^{2})$. That is, there are many minimax optimal rules for this particular subproblem. Using the uninformative data can still be minimax optimal under mean regret criterion, as using random data may be purely utilized as a radomization device without affecting the mean of regret. This draws a sharp contrast with mean square regret, under which the  minimax optimal rule is $\hat{\delta}^*_{0,a_{t}}=\frac{1}{2}$. That is,  the minimax optimal rule under mean square regret is to not use data at all and allocate a fraction of $\frac{1}{2}$ of the whole population to treatment. Such a fractional rule may be implemented via a randomization device that does not depend on data. This is intuitively easy to understand: any other rule that (1) is optimal in terms of the mean of regret and (2) uses random data and generates a positive variance of regret is not optimal in terms of mean square regret as they introduce further variance with respect to data without decreasing the mean of regret. 

\end{rem}

\subsection{Hardest one-dimensional subproblem}

From Lemma \ref{lem:main.2}, we see that for each one-dimensional
subproblem where $\theta\in\Theta_{a_{e},a_{t}}$, the worst mean
square regret of the minimax optimal rule depends on the value
of $a_{e}$ and $a_{t}$, both of which are assumed to be known. Let $a_{e}^{*}\geq0$ and $a_{t}^{*}\in I(a_{e}^{*})$ be two constants. We call the problem of finding a minimax optimal rule when $\theta\in\Theta_{a_{e}^{*},a_{t}^{*}}$
the \emph{hardest one-dimensional subproblem} if 
\[
\sup_{\theta\in\Theta_{a_{e}^{*},a_{t}^{*}}}R_{sq}(\hat{\delta}_{a_{e}^{*},a_{t}^{*}}^{*},\theta)=\sup_{a_{e}\geq0,a_{t}\in I(a_{e})}\sup_{\theta\in\Theta_{a_{e},a_{t}}}R_{sq}(\hat{\delta}_{a_{e},a_{t}}^{*},\theta).
\]
That is, $\Theta_{a_{e}^{*},a_{t}^{*}}$ is the one-dimensional parameter
space that yields the largest possible worst-case mean square regret
of its associated minimax rule. If we view the minimax problem as a game between the adversarial Nature and the econometrician, then the hardest one-dimensional subproblem is the problem that the Nature will pick, provided that the Nature is restricted to choose only among the one-dimensional subproblems. 
To characterise the hardest one-dimensional subproblem,
let 
\begin{equation}\label{eq:a.star}
a^{*}\in\arg\sup_{0\leq\tilde{a}_{e}\leq\tau^*} 
\left( \tilde{a}_{e}+\frac{k}{\sigma} \right)^{2} \rho\left(\tilde{a}_{e}\right)  
\end{equation}

\begin{lem}
\label{lem:main.3}[Mean square regret minimax rule of the hardest one-dimensional subproblem]
\begin{itemize}
\item[(i)]  

The hardest one dimensional subproblem corresponds
to $a_{e}^{*}=a^{*}\sigma$, and $a_{t}^{*}=a^{*}\sigma+k$. Let $\Theta_{\mathrm{H}}:=\Theta_{a^{*}\sigma,a^{*}\sigma+k}$
be the hardest one-dimensional parameter space. The minimax optimal rule with
respect to this hardest one dimensional subproblem is
\[
\hat{\delta}_{\text{H}}^{*}:=\hat{\delta}_{a^{*}\sigma,a^{*}\sigma+k}^{*}=\frac{\exp\left(2\cdotp a^{*}\cdotp\frac{\hat{\theta}_{e}}{\sigma}\right)}{\exp\left(2\cdotp a^{*}\cdotp\frac{\hat{\theta}_{e}}{\sigma}\right)+1},
\]
and 
\[\sup_{\theta\in\Theta_{\mathrm{H}}}R_{sq}(\hat{\delta}_{\text{H}}^{*},\theta)=\sigma^{2}\left(a^{*}+\frac{k}{\sigma}\right)^{2}\rho\left(a^{*}\right).\] 
\item[(ii)] $0<a^{*}<\tau^{*}$. 
\item[(iii)] $a^*$ is strictly decreasing in $k$ and strictly increasing in $\sigma$. 
\end{itemize}
\end{lem}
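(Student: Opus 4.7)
The plan is to reduce the two-variable supremum in the definition of the hardest one-dimensional subproblem to the one-variable maximization in (\ref{eq:a.star}). Starting from Lemma \ref{lem:main.2}, for each $a_e \geq 0$, the worst-case mean square regret of $\hat\delta^*_{a_e,a_t}$ is quadratic in $a_t$, so its maximum over $a_t \in I(a_e) = [a_e - k, a_e + k]$ is attained at $a_t = a_e + k$ (since $a_e + k > |a_e - k|$ when $a_e \geq 0$, $k>0$). Writing $\tilde a_e := a_e/\sigma$, the resulting function of $\tilde a_e$ is piecewise: on $[\tau^*, \infty)$ it equals $\sigma^2[(\tilde a_e + k/\sigma)/\tilde a_e]^2(\tau^*)^2\rho(\tau^*)$, which is strictly decreasing; on $[0, \tau^*]$ it equals $g(\tilde a_e) := \sigma^2(\tilde a_e + k/\sigma)^2 \rho(\tilde a_e)$. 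The two pieces match at $\tilde a_e = \tau^*$, so the global supremum equals $\sup_{\tilde a_e \in [0, \tau^*]} g(\tilde a_e) = \sigma^2(a^* + k/\sigma)^2\rho(a^*)$, attained at $a^*$ by (\ref{eq:a.star}). Since $a^* < \tau^*$ (part (ii)) and $a_t^* = a^*\sigma + k > 0$, the second case of Lemma \ref{lem:main.2} applied with $a_e = a^*\sigma$ yields the stated formula for $\hat\delta_{\mathrm H}^*$.

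\textbf{Part (ii).} To show $a^* > 0$, I compute $g'(0) = \sigma^2[2(k/\sigma)\rho(0) + (k/\sigma)^2\rho'(0)]$. The key intermediate step is $\rho'(0) = 0$, obtained by differentiating $\rho(a) = \int [1/(\exp(2a^2 + 2aw)+1)]^2 \phi(w)\,dw$ (after the change of variable $\bar Y_1 = a + W$ with $W\sim N(0,1)$) under the integral: the derivative at $a=0$ reduces to $-\tfrac{1}{2}\mathbb{E}[W] = 0$. Combined with $\rho(0) = 1/4$, this yields $g'(0) = \sigma k/2 > 0$, ruling out $a^* = 0$. For $a^* < \tau^*$, I use the interior first-order condition defining $\tau^*$ in Lemma \ref{lem:main.1}, namely $2\tau^*\rho(\tau^*) + (\tau^*)^2\rho'(\tau^*) = 0$. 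Substituting $\rho'(\tau^*) = -2\rho(\tau^*)/\tau^*$ into $g'(\tau^*)$ and simplifying yields $g'(\tau^*) = -2\sigma^2\rho(\tau^*)(\tau^* + k/\sigma)(k/\sigma)/\tau^* < 0$, so the maximizer lies strictly inside $(0, \tau^*)$.

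\textbf{Part (iii).} Setting $m := k/\sigma$, I show $a^*(m)$ is strictly decreasing. For any $m_1 < m_2$ with maximizers $a_i^*$ of $(\tilde a_e + m_i)^2 \rho(\tilde a_e)$, the revealed preference inequalities $(a_1^* + m_1)^2\rho(a_1^*) \geq (a_2^* + m_1)^2\rho(a_2^*)$ and $(a_2^* + m_2)^2\rho(a_2^*) \geq (a_1^* + m_2)^2\rho(a_1^*)$ combine---after multiplying, dividing by $\rho(a_1^*)\rho(a_2^*) > 0$, and taking positive square roots---to give $(m_2 - m_1)(a_1^* - a_2^*) \geq 0$, hence $a_1^* \geq a_2^*$. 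To obtain strict inequality, suppose $a_1^* = a_2^* = a^\dagger$; since both are interior by part (ii), the FOCs $2\rho(a^\dagger) + (a^\dagger + m_i)\rho'(a^\dagger) = 0$ hold for $i=1,2$, and subtracting forces $\rho'(a^\dagger) = 0$, hence $\rho(a^\dagger) = 0$---contradicting positivity of $\rho$. Thus $a^*$ is strictly decreasing in $m = k/\sigma$, implying strict monotonicity in $k$ (decreasing) and $\sigma$ (increasing). The main technical hurdle in the entire proof is the calculation $\rho'(0) = 0$; the other steps are routine once the piecewise structure from Lemma \ref{lem:main.2} and the FOC for $\tau^*$ are leveraged.
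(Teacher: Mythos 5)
Your proofs of parts (i) and (ii) follow essentially the same route as the paper: maximize over $a_t\in I(a_e)$ first (the worst case is $a_t=a_e+k$ since the regret is proportional to $a_t^2$), observe that the branch $\tilde a_e\geq\tau^*$ is decreasing and matches the branch $g(\tilde a_e)=\sigma^2(\tilde a_e+k/\sigma)^2\rho(\tilde a_e)$ at $\tau^*$, reduce to the optimization in (\ref{eq:a.star}), and then sign $g^{(1)}(0)>0$ via $\rho^{(1)}(0)=0$ and $g^{(1)}(\tau^*)<0$ via the first-order condition for $\tau^*$; these computations agree with the paper's. Part (iii) is where you genuinely depart. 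The paper differentiates the first-order condition $2\rho(a^*)+(a^*+k/\sigma)\rho^{(1)}(a^*)=0$ implicitly and signs $\partial a^*/\partial k$ using the asserted strict second-order condition $3\rho^{(1)}(a^*)+(a^*+k/\sigma)\rho^{(2)}(a^*)<0$. You instead use a revealed-preference (monotone comparative statics) argument in the single parameter $m=k/\sigma$: multiplying the two optimality inequalities, cancelling $\rho(a_1^*)\rho(a_2^*)>0$, and taking square roots of nonnegative quantities gives $(m_2-m_1)(a_1^*-a_2^*)\geq 0$, and strictness follows by subtracting the two interior first-order conditions, which would force $\rho^{(1)}(a^\dagger)=0$ and hence $\rho(a^\dagger)=0$, a contradiction. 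Your argument is arguably more robust: it does not require $a^*$ to be a differentiable function of $k$, does not presuppose uniqueness of the maximizer (it shows any maximizer at $m_1$ strictly exceeds any maximizer at $m_2$), and avoids the strict second-order inequality that the paper asserts without justification (a local maximum of a $C^2$ function only guarantees the weak inequality). Both approaches correctly exploit that $a^*$ depends on $(k,\sigma)$ only through the ratio $k/\sigma$.
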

\begin{proof}
See Appendix \ref{sec:app.a}.
\end{proof}
It turns out $\hat{\delta}_{\text{H}}^{*}$ is not only a minimax
optimal rule of the hardest one-dimensional subproblem, but also a minimax
optimal rule of the original two-dimensional problem. That is, choosing the hardest one-dimensional subproblem is still the adversarial Nature's best move, even if they are allowed to choose any parameter in the two-dimensional parameter space.
\begin{thm}\label{thm:1}
\label{thm:main.1}$\sup_{\theta\in\Theta}R_{sq}(\hat{\delta}_{\mathrm{H}}^{*},\theta)=\min_{\hat{\delta}\in\mathcal{D}}\sup_{\theta\in\Theta}R_{sq}(\hat{\delta},\theta).$
That is, $\hat{\delta}_{\mathrm{H}}^{*}$ is a minimax optimal rule in terms of mean square regret for the original two-dimensional problem analyzed in Section \ref{sec:setup}.
\end{thm}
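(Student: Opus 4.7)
My plan is to sandwich $\min_{\hat{\delta}\in\mathcal{D}}\sup_{\theta\in\Theta}R_{sq}(\hat{\delta},\theta)$ between a lower bound from the hardest one-dimensional subproblem and a matching upper bound obtained by explicitly analysing $\sup_{\theta\in\Theta}R_{sq}(\hat{\delta}_{\mathrm{H}}^{*},\theta)$. Since $\Theta_{\mathrm{H}}\subseteq\Theta$, shrinking the adversary's parameter space cannot increase the minimax value, so
\[
\min_{\hat{\delta}}\sup_{\theta\in\Theta}R_{sq}(\hat{\delta},\theta)\;\geq\;\min_{\hat{\delta}}\sup_{\theta\in\Theta_{\mathrm{H}}}R_{sq}(\hat{\delta},\theta)\;=\;\sigma^{2}\bigl(a^{*}+k/\sigma\bigr)^{2}\rho(a^{*})
\]
by Lemma \ref{lem:main.3}(i). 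Equivalently, one checks that $\hat{\delta}_{\mathrm{H}}^{*}$ is the Bayes rule against the symmetric two-point prior $\pi^{*}=\tfrac{1}{2}\delta_{(a^{*}\sigma,\,a^{*}\sigma+k)}+\tfrac{1}{2}\delta_{(-a^{*}\sigma,\,-a^{*}\sigma-k)}$, whose Bayes risk equals this same number, and invoke the standard fact that any Bayes risk lower bounds the minimax.

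For the upper bound I exploit the joint symmetry of $\hat{\delta}_{\mathrm{H}}^{*}$ and of the problem. The rule satisfies $\hat{\delta}_{\mathrm{H}}^{*}(-x)=1-\hat{\delta}_{\mathrm{H}}^{*}(x)$, and $\Theta$ is invariant under $(\theta_{e},\theta_{t})\mapsto(-\theta_{e},-\theta_{t})$; together these yield $R_{sq}(\hat{\delta}_{\mathrm{H}}^{*},(\theta_{e},\theta_{t}))=R_{sq}(\hat{\delta}_{\mathrm{H}}^{*},(-\theta_{e},-\theta_{t}))$, so it suffices to consider $\theta_{t}\geq 0$. On this slice the risk factors as $\theta_{t}^{2}\,g(\theta_{e})$ with $g(\theta_{e}):=\mathbb{E}_{\theta_{e}}[(1-\hat{\delta}_{\mathrm{H}}^{*}(\hat{\theta}_{e}))^{2}]$, which is monotone in $\theta_{t}\geq 0$, so for each $\theta_{e}\geq-k$ the worst $\theta_{t}\in I(\theta_{e})$ is the upper endpoint $\theta_{e}+k$. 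The two-dimensional claim therefore reduces to the scalar inequality
\[
F(\theta_{e})\;:=\;(\theta_{e}+k)^{2}\,g(\theta_{e})\;\leq\;\sigma^{2}\bigl(a^{*}+k/\sigma\bigr)^{2}\,\rho(a^{*})\qquad\text{for all }\theta_{e}\geq -k,
\]
with equality at $\theta_{e}=a^{*}\sigma$; combined with the lower bound, this will prove the theorem.

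The main obstacle is establishing this one-variable inequality, i.e., showing that $F$ attains its global maximum at $\theta_{e}=a^{*}\sigma$. Writing $\psi_{a^{*}}(y):=1/(\exp(2a^{*}y)+1)$ and applying Stein's identity to $\hat{\theta}_{e}\sim N(\theta_{e},\sigma^{2})$, one obtains $g'(\theta_{e})=-(4a^{*}/\sigma)\,\mathbb{E}_{\theta_{e}}[\psi_{a^{*}}^{2}(1-\psi_{a^{*}})]$, so the first-order condition $F'(a^{*}\sigma)=0$ becomes an algebraic identity among moments of $\psi_{a^{*}}$ under $N(a^{*},1)$. One then has to show that this identity is implied by the first-order condition defining $a^{*}$ in \eqref{eq:a.star}. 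This step is delicate because $\rho(\tilde{a}_{e})$ in \eqref{eq:a.star} ties the Gaussian mean and the logistic scale through the same parameter $\tilde{a}_{e}$, whereas $g(\theta_{e})$ fixes the scale at $a^{*}$ and only varies the mean, so reconciling the two FOCs requires a careful Stein-type decoupling of the two contributions to $\rho'(a^{*})$. Once the stationary point is verified, global optimality follows because $F(\theta_{e})\to 0$ as $|\theta_{e}|\to\infty$ (the Gaussian tails of $g$ dominate the quadratic prefactor) and any competing interior critical point can be ruled out by a unimodality argument on $F$.
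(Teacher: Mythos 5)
Your overall architecture is the right one and matches the paper's: a lower bound from the one-dimensional subproblem (the paper's chain of inequalities using $\Theta_{\mathrm{H}}\subseteq\Theta$), a symmetry reduction $R_{sq}(\hat{\delta}_{\mathrm{H}}^{*},\theta)=R_{sq}(\hat{\delta}_{\mathrm{H}}^{*},-\theta)$ so that only $\theta_{t}\geq0$ and hence only $\theta_{t}=\theta_{e}+k$ matters (the paper's Lemma \ref{lem:4}), and a reduction to the scalar claim that $F(\theta_{e})=(\theta_{e}+k)^{2}g(\theta_{e})$ is globally maximized at $\theta_{e}=a^{*}\sigma$ (the paper's Theorem \ref{thm:2} and Lemma \ref{lem:5}). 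Your Stein computation of $g'$ is also correct.

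However, the two steps you flag as "delicate" and "a unimodality argument" are precisely the substantive content of the proof, and you leave both unproven, so there is a genuine gap. First, matching the stationarity condition $F'(a^{*}\sigma)=0$ to the first-order condition from \eqref{eq:a.star} requires showing that the extra term in $\rho^{(1)}$ coming from varying the logistic scale (not just the Gaussian mean) vanishes at $a^{*}$; the paper's Lemma \ref{lem:6} proves exactly this, by rewriting that term as $-4\int w_{\tilde{\theta}_{e}}(y)\,y\,dy$ with $w_{\tilde{\theta}_{e}}$ even in $y$, so it is identically zero. You correctly identify that this decoupling is needed but do not supply it. Second, and more seriously, knowing that $F$ vanishes at $\theta_{e}=-k$, decays at $+\infty$, and has a critical point at $a^{*}\sigma$ does not rule out a second, higher local maximum elsewhere; "any competing interior critical point can be ruled out by a unimodality argument" is an assertion, not an argument. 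The paper's Lemma \ref{lem:7} does the real work here: it shows that $F'$ has the form $(\theta_{e}+k)\int\mathbf{w}(y)\phi(y-\tilde{\theta}_{e})dy$ with $\mathbf{w}$ having exactly one sign change from $+$ to $-$, and then invokes the variation-diminishing property of the Gaussian kernel (Theorem C.1 of \citealt{kitagawa2022treatment}) to conclude $F'$ itself changes sign exactly once, at $a^{*}$. Without an argument of this type (or some other global comparison), the upper bound $\sup_{\theta\in\Theta}R_{sq}(\hat{\delta}_{\mathrm{H}}^{*},\theta)\leq\sigma^{2}(a^{*}+k/\sigma)^{2}\rho(a^{*})$ is not established, and the theorem does not follow.
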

\begin{proof}
See Appendix \ref{sec:app.a}.
\end{proof}
\begin{rem}\label{rem:connection.with.donoho}
By now, we can see a clear connection between the approach taken by \cite{donoho1994statistical} and \cite{yata2021} in finding minimax optimal decisions and the  \emph{guess-and-verify} approach  \citep[Proposition 4.2, ][]{kitagawa2022treatment}. Intuitively, we can view finding the
hardest one-dimensional subproblem as one way of finding the least
favorable  prior. Indeed, in the original two-dimensional problem, the least favorable  prior can be verified to be supported
on $\left(\begin{array}{c}
a^{*}\sigma\\
a^{*}\sigma+k
\end{array}\right)$ and $\left(\begin{array}{c}
-a^{*}\sigma\\
-a^{*}\sigma-k
\end{array}\right)$ with equal probabilities. Technically, once an econometrician figures out the structure of the least favorable  prior (which is possible given prior work in the literature, e.g., \citealt{stoye2012minimax}), they can proceed without using the techniques employed in this paper, by directly invoking \citet[Proposition 4.2, ][]{kitagawa2022treatment}. Therefore, it is not entirely clear which approach has a relative advantage in solving these minimax problems.
\end{rem}

\begin{rem}[Comparison with \citealt{kitagawa2022treatment}]\label{rem:cautious}

If the treatment effect of the target population is point-identified ($k=0$), the theory of  \cite{kitagawa2022treatment} applies and the minimax optimal rule is
$\hat{\delta}^{*}=\frac{\exp\left(2\cdotp \tau^{*}\cdotp\frac{\hat{\theta}_{e}}{\sigma}\right)}{\exp\left(2\cdotp \tau^{*}\cdotp\frac{\hat{\theta}_{e}}{\sigma}\right)+1}$, which agrees with the conclusion from Theorem \ref{thm:1} by mechanically setting $k=0$. Theorem \ref{thm:1} clearly demonstrates  the effect of partial identification ($k>0$) on the optimal decision rules. Partial identification moves the worst-case location  of the point-identified parameter $\theta_{e}$ further toward zero and away from $\tau^*$: the minimax optimal rule becomes $\hat{\delta}^{*}_{\text{H}}=\frac{\exp\left(2\cdotp a^{*}\cdotp\frac{\hat{\theta}_{e}}{\sigma}\right)}{\exp\left(2\cdotp a^{*}\cdotp\frac{\hat{\theta}_{e}}{\sigma}\right)+1}$ with $a^*<\tau^*$. 
Therefore, partial identification further encourages the decision maker to be more cautious against the adversarial Nature: optimal treatment fraction under partial identification will be closer to 0 compared to a point-identified situation. From Lemma \ref{lem:main.3}(iii), we know the value of $a^*$ decreases as $k$ becomes larger: more partial identification results in more ambiguity, leading to more prudent or cautious treatment allocation. If $k=\infty$, then $a^*=0$ and the optimal treatment rule becomes $\hat{\delta}^{*}_{\text{H}}=\frac{1}{2}$. 
   
\end{rem}

\begin{rem}[Comparison with \citealt{stoye2012minimax} amd \cite{yata2021}]\label{rem:comparison.stoye.yata}
The conclusion of Theorem \ref{thm:1} is quantitatively and qualitatively different from the conclusion of \cite{stoye2012minimax} and \cite{yata2021}, who both use the mean of regret as a risk criterion and derive optimal fractional rules when $k$ is large enough. As shown by \cite{stoye2012minimax} (and was generalized by \cite{yata2021} to setups with multivariate signals), if mean of the regret is the risk criterion, whether or not a minimax optimal rule is fractional depends on the magnitude of $k$. 
If $k\leq \sqrt{\frac{\pi}{2}}\sigma$, the naive empirical success rule $\mathbf{1}\{\hat{\theta}_{e}\geq0\}$ is minimax optimal. When $k>\sqrt{\frac{\pi}{2}}\sigma$, a minimax optimal rule is found to be fractional and admits $\hat{\delta}^*=\Phi\bigl(\hat{\theta}_{e}/\sqrt{2k^{2}/\pi-\sigma^{2}}\bigr)$, under which the worst-case location for $\theta_{e}$ is at $0$, i.e.,  when data are uninformative. 
Theorem \ref{thm:1} draws a very different picture compared to the existing literature: first of all, optimal rules are always fractional, irrespective of the magnitude of $k$. Second, the worst-case location for $\theta_{e}$ is at  $\pm a^*\sigma\neq0$, which implies that  data is still informative regarding the true unidentified treatment effect of the target population. See Figure \ref{fig:1} for an illustration of the minimax optimal rules in terms of mean regret and mean square regret with respect to different values of $k$. 


\end{rem}

\begin{figure}[http]
    \centering
    \includegraphics[scale=0.403]{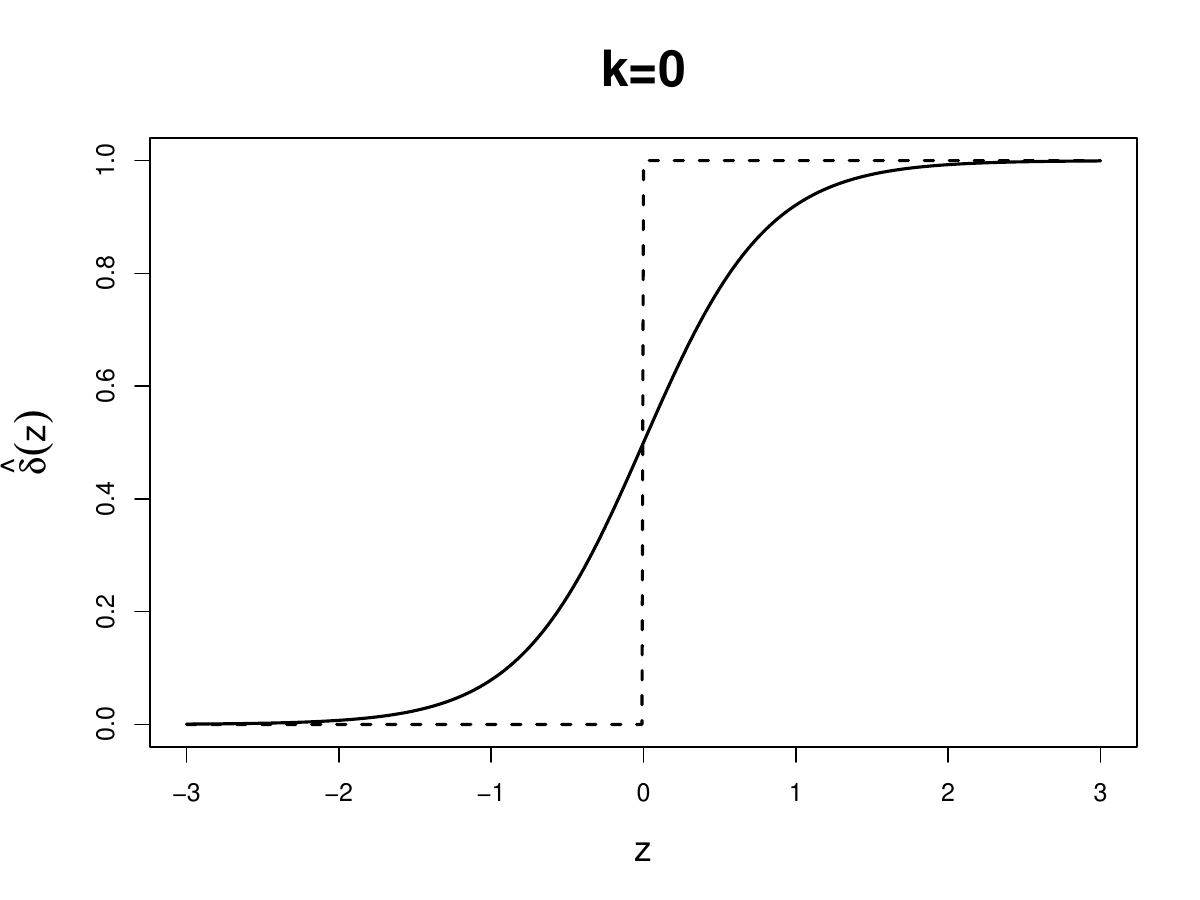}
     \includegraphics[scale=0.403]{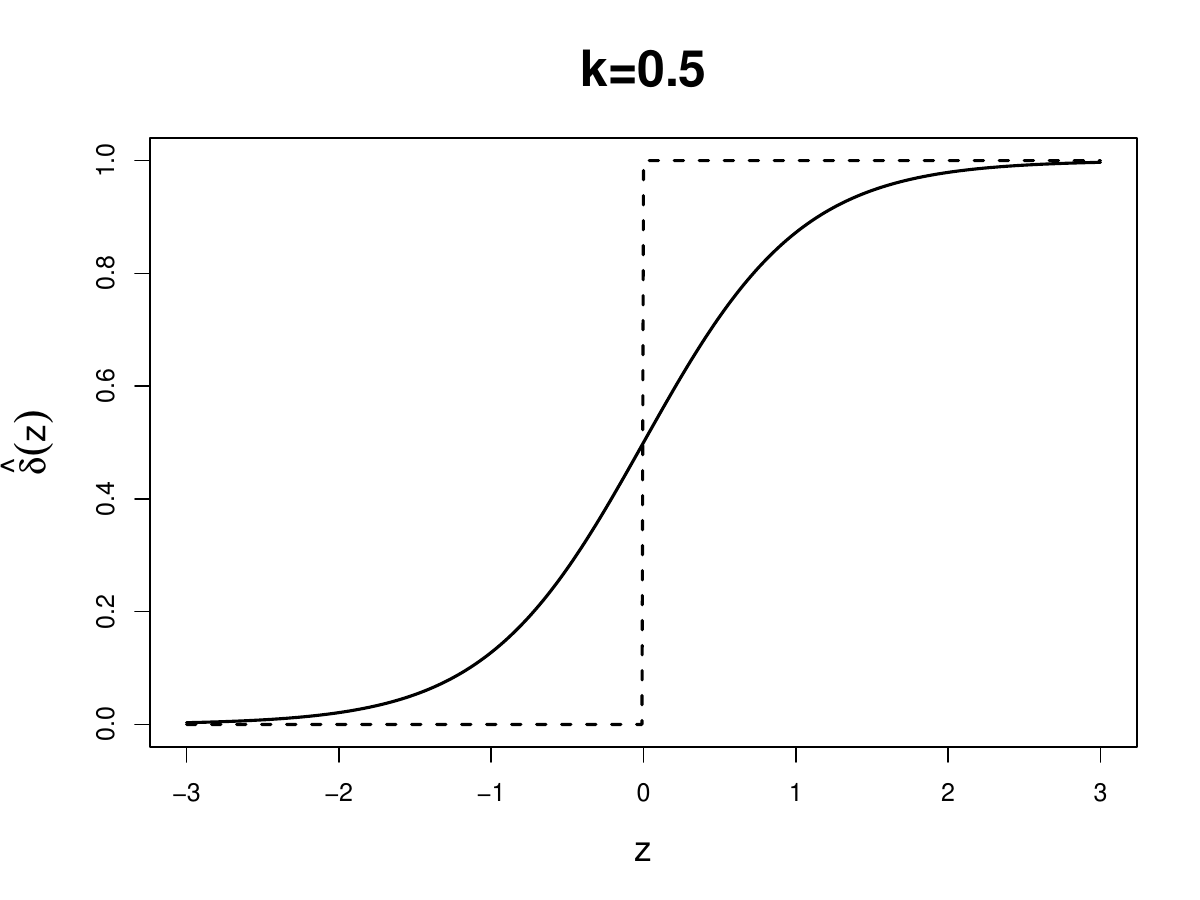}\\
      \includegraphics[scale=0.403]{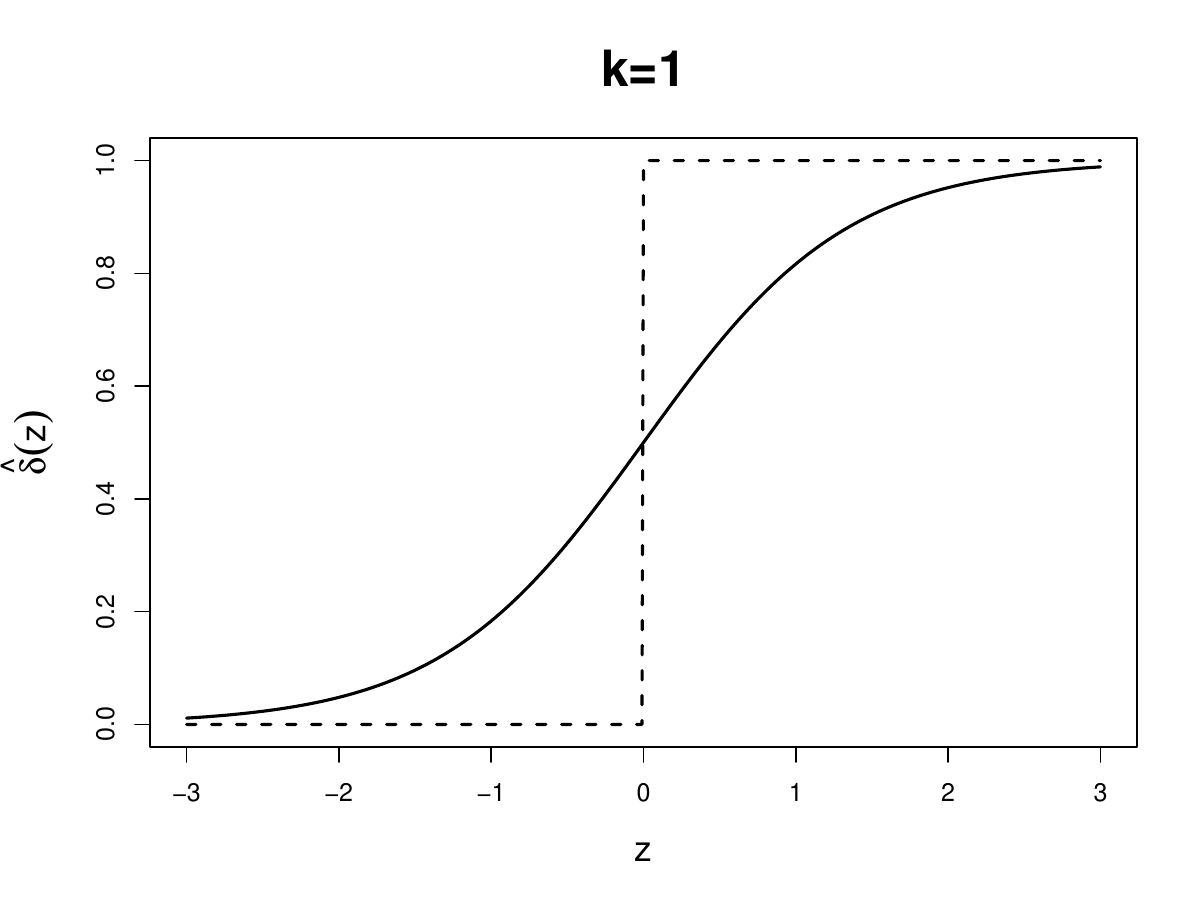}
     \includegraphics[scale=0.403]{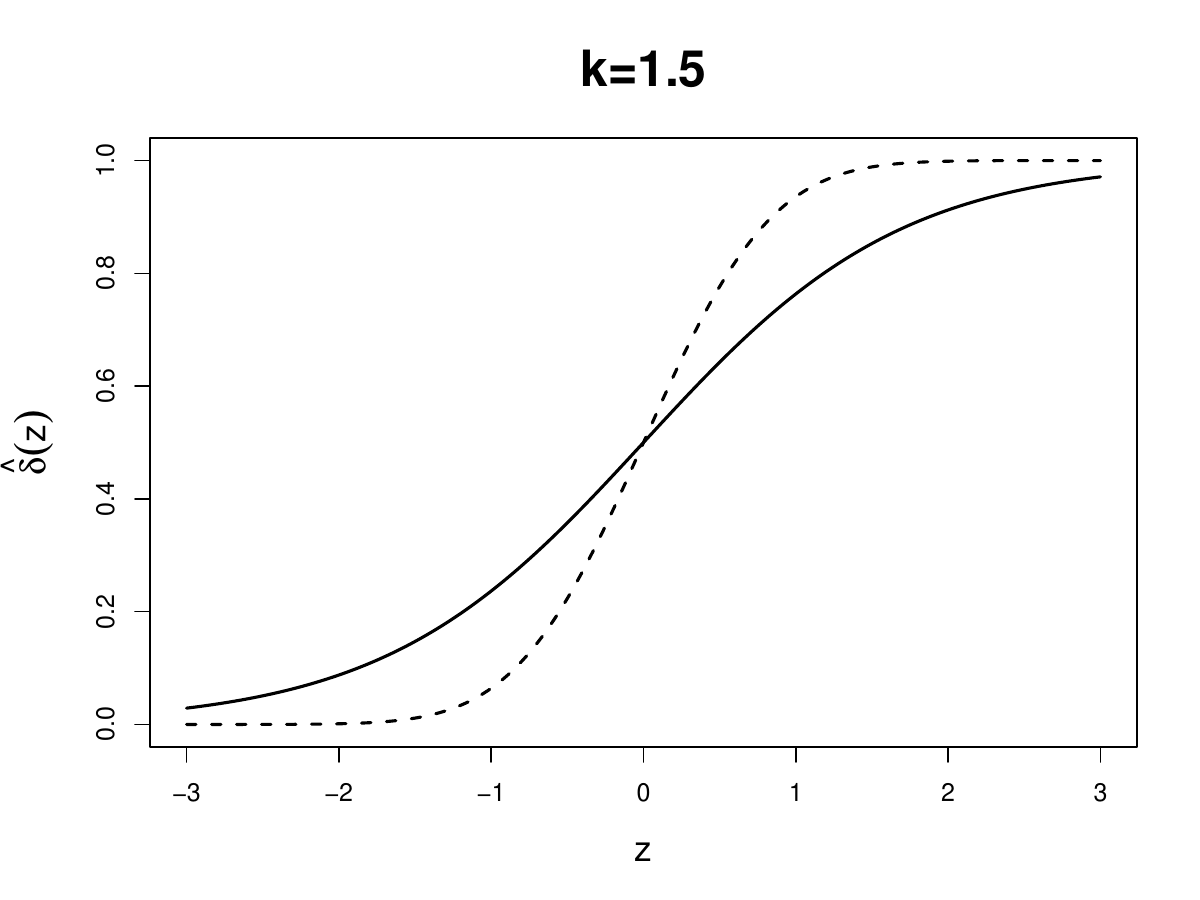}\\
    \includegraphics[scale=0.403]{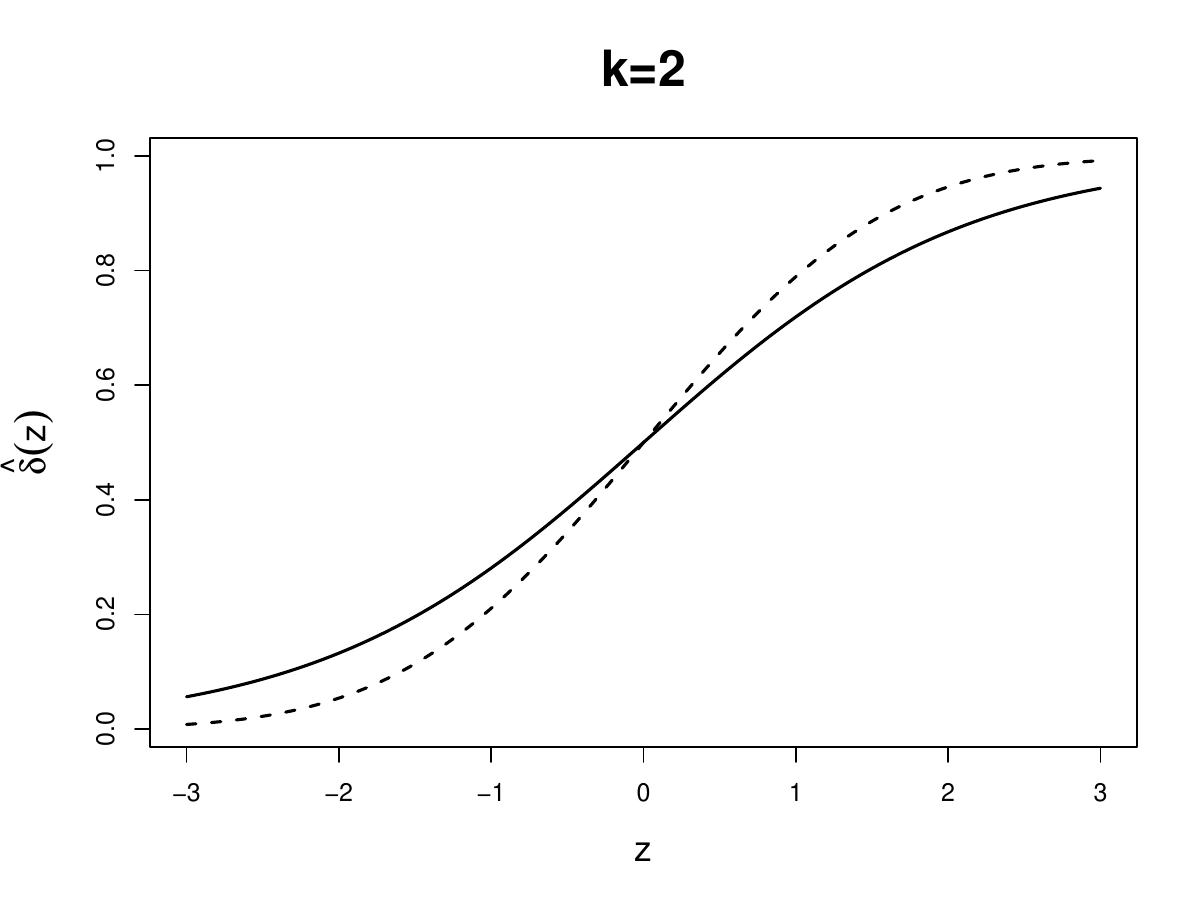}
     \includegraphics[scale=0.403]{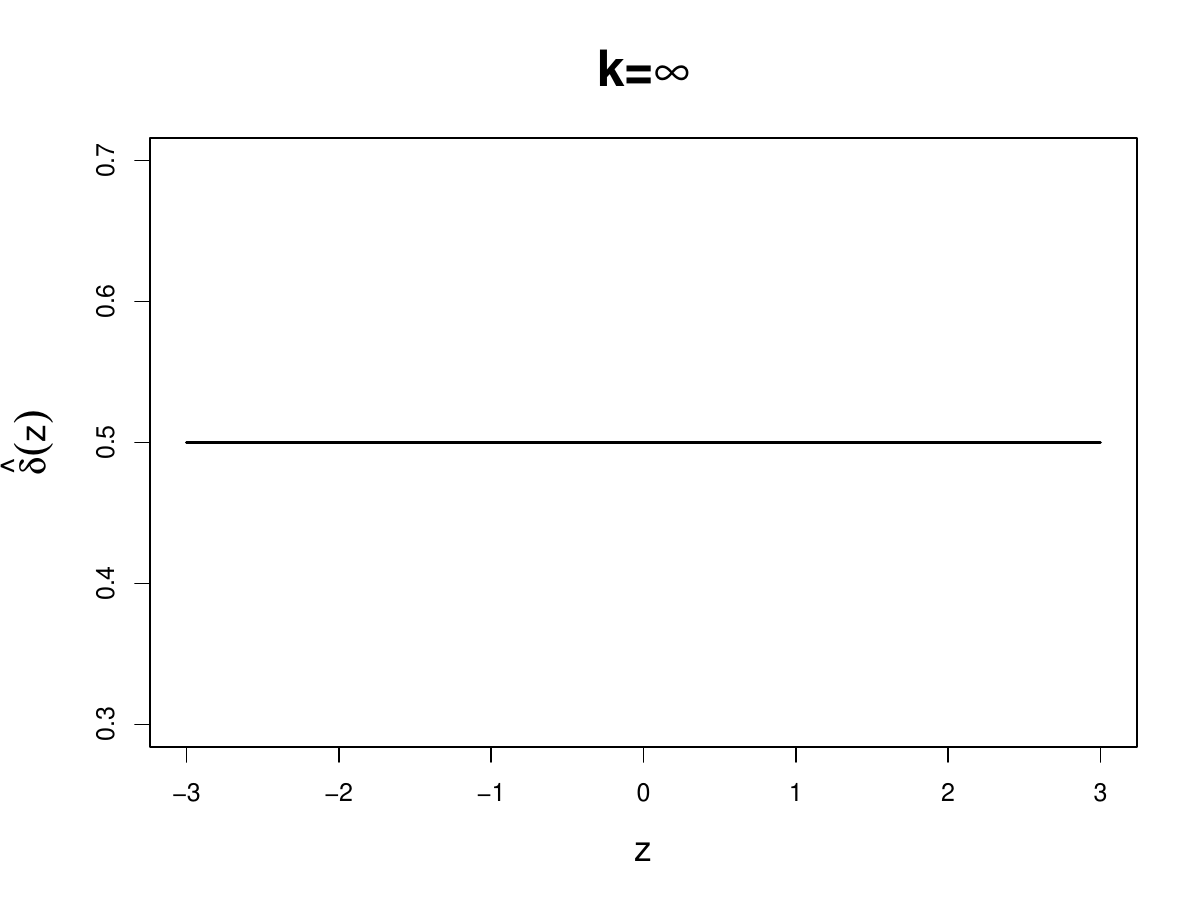}\\
    \caption{Minimax optimal rules in the Gaussian experiment with a unit variance and an unknown mean. In each of the graphs, $k$ represents the width of the identified set. The dashed line is minimax optimal rule with respect to mean regret as a function of $z$, where $z$ represents each possible realization of the Gaussian experiment. The solid line is minimax optimal rule with respect to mean square regret as a function of $z$.  Note in the limiting case $k=\infty$, the two rules coincide.}
    \label{fig:1}
\end{figure}

\section{Conclusion}\label{sec:conclusion}
In this paper, we study optimal binary treatment choice with mean square regret and with partially identified welfare, extending the analyses by \cite{kitagawa2022treatment}. Our results lead to a simple and intuitive rule that is sharply different from the existing literature on treatment choice under partial identification with mean regret criterion. In particular, minimax optimal rules are always fractional, irrespective of the width of the identified set. The optimal treatment fraction is a logistic transformation of the commonly used t-statistic multiplied by a factor that is calculated by a simple constrained optimization. Our results are useful for policy makers who wish to make fractional treatment assignment but are concerned that the true optimal policy can not be identified from data. For future research, it would be interesting to consider optimal treatment choice with a general and arbitrary identified set, or with an estimated identified set. It would also be interesting to consider optimal individualised treatment choice with mean square regret.

\newpage{}

\appendix

\section{Proofs of main results}\label{sec:app.a}

\subsection*{Proof of Lemma \ref{lem:main.1}}

By Remark \ref{rem:stylized}, we focus on
the case when $c<\tau^{*}$. Let $\pi_{c}$ be a prior on $\tau$
such that $\pi_{c}(c)=\pi_{c}(-c)=\frac{1}{2}$. It can be verified that the Bayes optimal rule with respect to $\pi_{c}$ is
\[
\hat{\delta}_{\pi_{c}}(\bar{Y})=\frac{\exp\left(2\cdotp c\cdotp\bar{Y}\right)}{\exp\left(2\cdotp c\cdotp\bar{Y}\right)+1}=\hat{\delta}^{*}(\bar{Y}).
\]
By applying integration by change-of-variable, we may find the Bayes mean square regret of $\hat{\delta}_{\pi_{c}}(\bar{Y})$ as 
\begin{align*}
r_{sq}(\hat{\delta}_{\pi_{c}},\pi_{c}) & :=\int R_{sq}(\hat{\delta}_{\pi_{c}},\tau)d\pi_{c}(\tau)\\
 & =\frac{1}{2}R_{sq}(\hat{\delta}_{\pi_{c}},c)+\frac{1}{2}R_{sq}(\hat{\delta}_{\pi_{c}},-c)\\
 & =c^{2}\rho(c).
\end{align*}
By Lemma \ref{lem:8}, $\sup_{\tau\in[-c,c]}R_{sq}(\hat{\delta}_{\pi_{c}},\tau)=c^{2}\rho(c)$,
implying $\hat{\delta}_{\pi_{c}}$ is indeed a minimax optimal rule
by applying \citet[Proposition 4.2, ][]{kitagawa2022treatment}. 

\subsection*{Proof of Lemma \ref{lem:main.2}}

We prove the lemma by considering two cases.

Case 1: $a_{e}=0$. In this case, for each $\theta\in\Theta_{0,a_{t}}$,
\[
R_{sq}(\hat{\delta},\theta)=\left(a_{t}s\right)^{2}\mathbb{E}\left[\left(\mathbf{1}\{a_{t}s\geq0\}-\hat{\delta}(\hat{\theta}_{e})\right)^{2}\right],
\]
where $\hat{\theta}_{e}\sim N(0,\sigma^{2})$. This is a case where
data $\hat{\theta}_{e}$ reveals no information
regarding the unknown $s$. If in addition to $a_{e}=0$, it holds that $a_{t}=0$. Then, any rule is minimax optimal. Focus on the case when $a_{t}\neq0$. Let $\mu_{\hat{\delta}}:=\mathbb{E}\hat{\delta}(\hat{\theta}_{e})$,
$V_{\hat{\delta}}:=\mathbb{E}\left[\left(\hat{\delta}(\hat{\theta}_{e})-\mathbb{E}\hat{\delta}(\hat{\theta}_{e})\right)^{2}\right]$.
We have the following decomposition
\begin{align*}
R_{sq}(\hat{\delta},\theta) & =\left(a_{t}s\right)^{2}\left\{ \left(\mathbf{1}\{a_{t}s\geq0\}-\mu_{\hat{\delta}}\right)^{2}+V_{\hat{\delta}}\right\} .
\end{align*}
That is, the mean square regret of each rule depends on $\hat{\delta}$
only via $\mu_{\hat{\delta}}$ and $V_{\hat{\delta}}$, both of which
are independent of $s$. Thus, for each $\hat{\delta}$
\begin{align*}
\sup_{\theta\in\Theta_{0,a_{t}}}R_{sq}(\hat{\delta},\theta) & =\max\left\{ a_{t}^{2}\left[\left(1-\mu_{\hat{\delta}}\right)^{2}+V_{\hat{\delta}}\right],a_{t}^{2}\left[\left(\mu_{\hat{\delta}}\right)^{2}+V_{\hat{\delta}}\right]\right\} \\
 & =a_{t}^{2}\left[\max\left(\left(1-\mu_{\hat{\delta}}\right)^{2},\mu_{\hat{\delta}}^{2}\right)+V_{\hat{\delta}}\right].
\end{align*}

As $a_{t}\neq0$, it is easy to see that a minimax optimal rule would set $V_{\hat{\delta}}=0$
and $\mu_{\hat{\delta}}=\frac{1}{2}$. That is, $\hat{\delta}_{0,a_{t}}^{*}=\frac{1}{2}$,
which means that the minimax optimal rule does not use data $\hat{\theta}_{e}$
at all. Moreover, $\sup_{\theta\in\Theta_{0,a_{t}}}R_{sq}(\hat{\delta}_{0,a_{t}}^{*},\theta)=\frac{a_{t}^{2}}{4}$. 

Case 2: $a_{e}>0$. In this case, note for each $\theta\in\Theta_{a_{e},a_{t}}$,
\[
R_{sq}(\hat{\delta},\theta)=\left(a_{t}s\right)^{2}\mathbb{E}_{sa_{e}}\left[\left(\mathbf{1}\{a_{t}s\geq0\}-\hat{\delta}(\hat{\theta}_{e})\right)^{2}\right],
\]
where $\hat{\theta}_{e}\sim N(a_{e}s,\sigma^{2})$. If $a_{t}=0$, then any rule is minimax optimal. Focus on $a_{t}\neq0$. Then, it follows
\[
\frac{a_{t}}{a_{e}}\hat{\theta}_{e}\sim N\left(sa_{t},\left(\frac{a_{t}}{a_{e}}\right)^{2}\sigma^{2}\right).
\]

In this one-dimensional subproblem, $\frac{a_{t}}{a_{e}}\hat{\theta}_{e}$
is a sufficient staitistic for $s$ (and for $a_{t}s$ too). Therefore,
to show $\sup_{\theta\in\Theta_{a_{e},a_{t}}}R_{sq}(\hat{\delta}_{a_{e},a_{t}}^{*},\theta)=\min_{\hat{\delta}\in\mathcal{D}}\sup_{\theta\in\Theta_{a_{e},a_{t}}}R_{sq}(\hat{\delta},\theta)$,
it suffices to focus on rules that are functions of the statistic $\frac{a_{t}}{a_{e}}\hat{\theta}_{e}$
and show 
\[
\sup_{\theta\in\Theta_{a_{e},a_{t}}}R_{sq}(\hat{\delta}_{a_{e},a_{t}}^{*},\theta)=\min_{\hat{\delta}\in\tilde{\mathcal{D}}}\sup_{\theta\in\Theta_{a_{e},a_{t}}}R_{sq}(\hat{\delta},\theta),
\]
where $\mathcal{\tilde{D}}$ is a set of rules that is a function
of the statistic $\frac{a_{t}}{a_{e}}\hat{\theta}_{e}$. 
To this end, let $\tau_{s}:=sa_{t}\in\left[-\left|a_{t}\right|,\left|a_{t}\right|\right]$,
and let $\hat{\tau}_{s}:=\frac{a_{t}}{a_{e}}\hat{\theta}_{e}$. Then,
for each $\hat{\delta}\in\tilde{\mathcal{D}}$ and each $\theta\in\Theta_{a_{e},a_{t}}$,
we can write 
\[
R_{sq}(\hat{\delta},\theta)=\tau_{s}^{2}\mathbb{E}\left[\left(\mathbf{1}\{\tau_{s}\geq0\}-\hat{\delta}(\hat{\tau}_{s})\right)^{2}\right]
\]
where the $\mathbb{E}[\cdotp]$ is with respect to $\hat{\tau}_{s}\sim N\left(\tau_{s},\sigma_{\tau_{s}}^{2}\right),$
where $\sigma_{\tau_{s}}^{2}=\left(\frac{a_{t}}{a_{e}}\right)^{2}\sigma^{2}$.
Furthermore, note
\begin{align}
R_{sq}(\hat{\delta},\theta) & =\tau_{s}^{2}\int\left(\mathbf{1}\{\tau_{s}\geq0\}-\hat{\delta}(x)\right)^{2}\frac{1}{\sigma_{\tau_{s}}}\phi\left(\frac{x-\tau_{s}}{\sigma_{\tau_{s}}}\right)dx\nonumber \\
 & =\sigma_{\tau_{s}}^{2}\left(\frac{\tau_{s}}{\sigma_{\tau_{s}}}\right)^{2}\int\left(\mathbf{1}\left\{ \frac{\tau_{s}}{\sigma_{\tau_{s}}}\geq0\right\} -\hat{\delta}(\sigma_{\tau_{s}}z)\right)^{2}\phi\left(z-\frac{\tau_{s}}{\sigma_{\tau_{s}}}\right)d\left(z\right)\nonumber \\
 & =\sigma_{\tau_{s}}^{2}\left(\frac{\tau_{s}}{\sigma_{\tau_{s}}}\right)^{2}\mathbb{E}_{Z\sim N(\frac{\tau_{s}}{\sigma_{\tau_{s}}},1)}\left[\left(\mathbf{1}\left\{ \frac{\tau_{s}}{\sigma_{\tau_{s}}}\geq0\right\} -\hat{\delta}_{1}(Z)\right)^{2}\right]\label{pf:lem2.2}
\end{align}
where the first equality follows from the definition, the second equality
follows from applying integration by-change-of-variable and letting
$z=\frac{x}{\sigma_{\tau_{s}}}$, and letting $\hat{\delta}_{1}(z)=\hat{\delta}(\sigma_{\tau_{s}}z)$.
As $\sigma_{\tau_{s}}^{2}$ is known, solving $\min_{\hat{\delta}\in\tilde{\mathcal{D}}}\sup_{\theta\in\Theta_{a_{e},a_{t}}}R_{sq}(\hat{\delta},\theta)$
is equivalent to solving 
\begin{equation}
\min_{\hat{\delta}_{1}}\sup_{\frac{\tau_{s}}{\sigma_{\tau_{s}}}}R_{sq}(\hat{\delta}_{1},\frac{\tau_{s}}{\sigma_{\tau_{s}}}),\label{pf:lem2.1}
\end{equation}
where $R_{sq}(\hat{\delta}_{1},\frac{\tau_{s}}{\sigma_{\tau_{s}}})=\left(\frac{\tau_{s}}{\sigma_{\tau_{s}}}\right)^{2}\mathbb{E}_{Z\sim N(\frac{\tau_{s}}{\sigma_{\tau_{s}}},1)}\left[\left(\mathbf{1}\left\{ \frac{\tau_{s}}{\sigma_{\tau_{s}}}\geq0\right\} -\hat{\delta}_{1}(Z)\right)^{2}\right]$
is the mean square regret of rule $\hat{\delta}_{1}$, a function
of $\frac{\hat{\tau}_{s}}{\sigma_{\tau_{s}}}\sim N(\frac{\tau_{s}}{\sigma_{\tau_{s}}},1)$
with an unknown mean $\frac{\tau_{s}}{\sigma_{\tau_{s}}}$ and unit
variance. As $\frac{\tau_{s}}{\sigma_{\tau_{s}}}=\frac{a_{t}s}{\left|a_{t}\right|\sigma}a_{e}\in[-\frac{a_{e}}{\sigma},\frac{a_{e}}{\sigma}]$,
by applying Lemma \ref{lem:main.1}, we find the solution of (\ref{pf:lem2.1})
as follows
\begin{align*}
\hat{\delta}_{1}^{*} & \left(\frac{\hat{\tau}_{s}}{\sigma_{\tau_{s}}}\right)=\begin{cases}
\frac{\exp\left(2\cdotp\tau^{*}\cdotp\frac{\hat{\tau}_{s}}{\sigma_{\tau_{s}}}\right)}{\exp\left(2\cdotp\tau^{*}\cdotp\frac{\hat{\tau}_{s}}{\sigma_{\tau_{s}}}\right)+1}, & \text{if }\frac{a_{e}}{\sigma}\geq\tau^{*},\\
\frac{\exp\left(2\cdotp\frac{a_{e}}{\sigma}\cdotp\frac{\hat{\tau}_{s}}{\sigma_{\tau_{s}}}\right)}{\exp\left(2\cdotp\frac{a_{e}}{\sigma}\cdotp\frac{\hat{\tau}_{s}}{\sigma_{\tau_{s}}}\right)+1}, & \text{if }\frac{a_{e}}{\sigma}<\tau^{*},
\end{cases}
\end{align*}
which coincides with $\hat{\delta}_{a_{e},a_{t}}^{*}$. Furthermore,
by applying Lemma \ref{lem:main.1} and (\ref{pf:lem2.2}), we derive
the worst-case mean square regret of $\hat{\delta}_{a_{e},a_{t}}^{*}$
as
\[
\sup_{\theta\in\Theta_{a_{e},a_{t}}}R_{sq}(\hat{\delta}_{a_{e},a_{t}}^{*},\theta)=\begin{cases}
\sigma_{\tau_{s}}^{2}\left(\tau^{*}\right)^{2}\rho(\tau^{*})=\left(\frac{a_{t}}{a_{e}}\right)^{2}\sigma^{2}\left(\tau^{*}\right)^{2}\rho(\tau^{*})\approx0.12\left(\frac{a_{t}}{a_{e}}\right)^{2}\sigma^{2}, & \frac{a_{e}}{\sigma}\geq\tau^{*},\\
\sigma_{\tau_{s}}^{2}\frac{a_{e}^{2}}{\sigma^{2}}\rho(\frac{a_{e}}{\sigma})=a_{t}^{2}\rho\left(\frac{a_{e}}{\sigma}\right), & \frac{a_{e}}{\sigma}<\tau^{*}.
\end{cases}
\]

\subsection*{Proof of Lemma \ref{lem:main.3}}

\subsubsection*{Proof of statement (i)}

When $\frac{a_{e}}{\sigma}\geq\tau^{*}$, 
\begin{align}
\sup_{\frac{a_{e}}{\sigma}\geq\tau^{*},a_{t}\in I(a_{e})}\sup_{\theta\in\Theta_{a_{e},a_{t}}}R_{sq}(\hat{\delta}_{a_{e},a_{t}}^{*},\theta) & =\sup_{\frac{a_{e}}{\sigma}\geq\tau^{*}}\left(\frac{a_{e}+k}{a_{e}}\right)^{2}\sigma^{2}\left(\tau^{*}\right)^{2}\rho(\tau^{*})\nonumber \\
 & =\left(1+\frac{k}{\tau^{*}\sigma}\right)^{2}\sigma^{2}\left(\tau^{*}\right)^{2}\rho(\tau^{*})\nonumber \\
 & =\sigma^{2}\left(\tau^{*}+\frac{k}{\sigma}\right)^{2}\rho(\tau^{*})\label{pf:lem3.1}
\end{align}
where the first equalify follows from $\theta_{t}\in[\theta_{e}-k,\theta_{e}+k]$,
and the second equality is because $\left(\frac{a_{e}+k}{a_{e}}\right)^{2}$
is decreasing in $a_{e}$. Similarly, when $0\leq\frac{a_{e}}{\sigma}<\tau^{*}$,
\begin{align}
\sup_{0\leq\frac{a_{e}}{\sigma}<\tau^{*},a_{t}\in I(a_{e})}\sup_{\theta\in\Theta_{a_{e},a_{t}}}R_{sq}(\hat{\delta}_{a_{e},a_{t}}^{*},\theta) & =\sup_{0\leq\frac{a_{e}}{\sigma}<\tau^{*}}\left(a_{e}+k\right)^{2}\rho\left(\frac{a_{e}}{\sigma}\right)\nonumber \\
 & =\sup_{0\leq\frac{a_{e}}{\sigma}<\tau^{*}}\sigma^{2}\left(\frac{a_{e}}{\sigma}+\frac{k}{\sigma}\right)^{2}\rho\left(\frac{a_{e}}{\sigma}\right)\nonumber \\
 & =\sigma^{2}\sup_{0\leq\tilde{a}_{e}<\tau^{*}}\left(\tilde{a}_{e}+\frac{k}{\sigma}\right)^{2}\rho\left(\tilde{a}_{e}\right).\label{pf:lem3.2}
\end{align}

Considering both (\ref{pf:lem3.1}) and (\ref{pf:lem3.2}), we see that finding  the worst-case one-dimensional subproblem is reduced to finding
\[
a^{*}\in\arg\sup_{0\leq\tilde{a}_{e}\leq\tau^*}(\tilde{a}_{e}+\frac{k}{\sigma})^{2}\rho\left(\tilde{a}_{e}\right).
\]
Since $\tilde{a}_{e}=\frac{a_{e}}{\sigma}$, the hardest one-dimensional subproblem corresponds to $a_{e}^*=\sigma a^*$, $a_{t}^*=\sigma a^*+k$. Applying Lemma \ref{lem:main.2} yields the formula for $\hat{\delta}^*_{\mathrm{H}}$ and the expression for $\sup_{\theta\in\Theta_{\mathrm{H}}}R_{sq}(\hat{\delta}_{\text{H}}^{*},\theta)$ as stated in (i) of the current lemma.

\subsubsection*{Proof of statement (ii)}

Write $g(\tilde{a}_{e}):=\left(\tilde{a}_{e}+\frac{k}{\sigma}\right)^{2}\rho\left(\tilde{a}_{e}\right)$,
which is a continuous and  differentiable function. Therefore, $a^{*}\in\arg\sup_{0\leq\tilde{a}_{e}\leq\tau^{*}}(\tilde{a}_{e}+\frac{k}{\sigma})^{2}\rho\left(\tilde{a}_{e}\right)$
is finite. First, we show $a^{*}>0$. Let $f^{(1)}(\cdot)$ be the
first derivative of function $f(\cdot)$. Algebra shows
\begin{align*}
\rho^{(1)}(\tilde{a}_{e}) & =\int2\left(\frac{1}{\exp\left(2\tilde{a}_{e}x\right)+1}\right)\left(-\frac{1}{\left(\exp\left(2\tilde{a}_{e}x\right)+1\right)^{2}}\right)\exp\left(2\tilde{a}_{e}x\right)2x\phi\left(x-\tilde{a}_{e}\right)dx\\
 & -\int\left(\frac{1}{\exp\left(2\tilde{a}_{e}x\right)+1}\right)^{2}\phi^{(1)}\left(x-\tilde{a}_{e}\right)dx\\
 & =-4\int\left(\frac{\exp\left(2\tilde{a}_{e}x\right)x}{\left(\exp\left(2\tilde{a}_{e}x\right)+1\right)^{3}}\phi\left(x-\tilde{a}_{e}\right)\right)dx\\
 & +\int\left(\frac{1}{\exp\left(2\tilde{a}_{e}x\right)+1}\right)^{2}(x-\tilde{a}_{e})\phi\left(x-\tilde{a}_{e}\right)dx.
\end{align*}
Thus, 
\[
\rho^{(1)}(0)=-\frac{1}{2}\int x\phi\left(x\right)dx+\frac{1}{4}\int x\phi\left(x\right)dx=0
\]
 as $\int x\phi\left(x\right)dx=0$. It follows then 
\[
g^{(1)}(\tilde{a}_{e})=2\left(\tilde{a}_{e}+\frac{k}{\sigma}\right)\rho\left(\tilde{a}_{e}\right)+\left(\tilde{a}_{e}+\frac{k}{\sigma}\right)^{2}\rho^{(1)}\left(\tilde{a}_{e}\right),
\]
and $g^{(1)}(0)=2\frac{k}{\sigma}\rho\left(0\right)=\frac{1}{2}\frac{k}{\sigma}>0$
as $k>0$. This implies that moving away from $\tilde{a}_{e}=0$ to
a small positive number always increases $g(\tilde{a}_{e})$. Thus,
$0$ is never a solution of $\sup_{0\leq\tilde{a}_{e}\leq1.23}g(\tilde{a}_{e})$. 

Next, we show $a^{*}<\tau^{*}$. By algebra, 
\begin{equation}
g^{(1)}(\tau^{*})=2\left(\tau^{*}+\frac{k}{\sigma}\right)\rho\left(\tau^{*}\right)+\left(\tau^{*}+\frac{k}{\sigma}\right)^{2}\rho^{(1)}\left(\tau^{*}\right).\label{pf:lem3.3}
\end{equation}
Note $\tau^{*}$ solves $\sup\limits _{\tau\in[0,\infty)}\tau^{2}\rho(\tau)$
and satisfiy the following FOC:
\begin{equation}
2\tau^{*}\rho(\tau^{*})+\left(\tau^{*}\right)^{2}\rho^{(1)}(\tau^{*})=0,\label{pf:lem3.4}
\end{equation}
implying 
\begin{equation}
\rho^{(1)}(\tau^{*})=-\frac{2\rho(\tau^{*})}{\tau^{*}}\label{pf:lem3.5}
\end{equation}
(\ref{pf:lem3.3}), (\ref{pf:lem3.4}) and (\ref{pf:lem3.5}) together
yield
\begin{align*}
g^{(1)}(\tau^{*}) & =2\frac{k}{\sigma}\rho\left(\tau^{*}\right)+\left(\frac{k^{2}}{\sigma^{2}}+2\tau^{*}\frac{k}{\sigma}\right)\rho^{(1)}\left(\tau^{*}\right)\\
 & =-2\rho\left(\tau^{*}\right)\left[\frac{k}{\sigma}+\frac{k^{2}}{\tau^{*}\sigma^{2}}\right]<0,
\end{align*}
 implying $\tau^{*}$ is not a solution of $\sup_{0\leq\tilde{a}_{e}\leq1.23}g(\tilde{a}_{e})$.
\subsubsection*{Proof of statement (iii)}
By statement (ii), $a^*$ is an interior solution and must satisfy the following FOC:
\[
2\left(a^*+\frac{k}{\sigma}\right)\rho\left(a^*\right)+\left(a^*+\frac{k}{\sigma}\right)^{2}\rho^{(1)}\left(a^*\right)=0.
\]

As $(a^*+\frac{k}{\sigma})>0$, $a^*$ must also satisfy
\begin{equation}\label{pf:FOC.a.star}
2\rho\left(a^*\right)+\left(a^*+\frac{k}{\sigma}\right)\rho^{(1)}\left(a^*\right)=0.  
\end{equation}
Moreover, as $a^*$ is a local maximum of a continuously differentiable function, it must also satisfy the following second-order condition:
\begin{equation}\label{pf:SOC.a.star}
3\rho^{(1)}(a^{*})+(a^{*}+\frac{k}{\sigma})\rho^{(2)}(a^{*})<0.   
\end{equation}
Viewing the right-hand-side of (\ref{pf:FOC.a.star}) as a function of $a^*$ and $k$, say $F(a^*,k)$, we may write
\[
\frac{\partial a^{*}}{\partial k}=-\frac{\frac{\partial F(a^{*},k)}{\partial k}}{\frac{\partial F(a^{*},k)}{\partial a^{*}}}=-\frac{\frac{1}{\sigma}\rho^{(1)}(a^{*})}{3\rho^{(1)}(a^{*})+(a^{*}+\frac{k}{\sigma})\rho^{(2)}(a^{*})}.
\]
From (\ref{pf:FOC.a.star}), we know $\rho^{(1)}(a^*)<0$. Together with (\ref{pf:SOC.a.star}), we conclude that $\frac{\partial a^{*}}{\partial k}<0$. The proof for $\frac{\partial a^{*}}{\partial \sigma}>0$
is similar and omitted.

\subsection*{Proof of Theorem \ref{thm:main.1}}

Firstly, note the following inequalities hold:
\begin{align}
\sup_{\theta\in\Theta}R_{sq}(\hat{\delta}_{\text{H}}^{*},\theta) & \geq\sup_{\theta\in\Theta}R_{sq}(\hat{\delta}^{*},\theta)\nonumber \\
 & \geq\sup_{\theta\in\Theta_{\mathrm{H}}}R_{sq}(\hat{\delta}^{*},\theta)\nonumber \\
 & \geq\sup_{\theta\in\Theta_{\mathrm{H}}}R_{sq}(\hat{\delta}_{\mathrm{H}}^{*},\theta),\label{pf:thm1.1}
\end{align}
where the first inequality follows from the definition of $\hat{\delta}^{*}$,
the second relation follows from $\Theta_{\mathrm{H}}\subseteq\Theta$,
and the third relation follows from the fact that $\hat{\delta}_{\mathrm{H}}^{*}$
is a minimax optimal rule of the hardest one-dimensional subproblem.
Secondly, Theorem \ref{thm:2} establishes 
\begin{equation}
\sup_{\theta\in\Theta}R_{sq}(\hat{\delta}_{\text{H}}^{*},\theta)\leq\sup_{\theta\in\Theta_{\mathrm{H}}}R_{sq}(\hat{\delta}_{\mathrm{H}}^{*},\theta).\label{pf:thm1.2}
\end{equation}
Combining (\ref{pf:thm1.1}) and (\ref{pf:thm1.2}) yields the desired
conclusion. 

\section{Additional technical results}

Recall the definition of $a^*$ in (\ref{eq:a.star}). Let $\rho^{*}\left(\tilde{\theta}_{e}\right)=\int\left(\frac{1}{\exp\left(2\cdotp a^{*}\cdotp y\right)+1}\right)^{2}\phi(y-\tilde{\theta}_{e})dy$. 
\begin{thm}
$\sup_{\theta\in\Theta}R_{sq}(\hat{\delta}_{\mathrm{H}}^{*},\theta)\leq\sup_{\theta\in\Theta_{\mathrm{H}}}R_{sq}(\hat{\delta}_{\mathrm{H}}^{*},\theta)$.\label{thm:2}
\end{thm}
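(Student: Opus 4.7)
My plan is to compute $R_{sq}(\hat{\delta}_{\mathrm{H}}^{*},\theta)$ in closed form for an arbitrary $\theta\in\Theta$, reduce the two-dimensional supremum to a one-dimensional optimization along the upper boundary $\{\theta_{t}=\theta_{e}+k\}$ of $\Theta$, and identify $u=a^{*}$ as the global maximizer of that 1D problem by combining a first-order condition with a log-concavity argument.

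First I would use the change of variable $z=\hat{\theta}_{e}/\sigma$ and the definition of $\rho^{*}$ to derive $R_{sq}(\hat{\delta}_{\mathrm{H}}^{*},\theta)=\theta_{t}^{2}\rho^{*}(\theta_{e}/\sigma)$ when $\theta_{t}\geq 0$ and $R_{sq}(\hat{\delta}_{\mathrm{H}}^{*},\theta)=\theta_{t}^{2}\rho^{*}(-\theta_{e}/\sigma)$ when $\theta_{t}<0$. The involution $(\theta_{e},\theta_{t})\mapsto(-\theta_{e},-\theta_{t})$ preserves $\Theta$ and interchanges the two cases, so it is enough to take the supremum over $\{\theta_{t}\geq 0\}$. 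For each fixed $\theta_{e}\geq -k$, $\theta_{t}\mapsto\theta_{t}^{2}$ is monotone on the feasible segment $[0,\theta_{e}+k]$, so the inner sup is attained at $\theta_{t}=\theta_{e}+k$ with value $(\theta_{e}+k)^{2}\rho^{*}(\theta_{e}/\sigma)$. Writing $u=\theta_{e}/\sigma$, the task reduces to showing $\sup_{u\geq -k/\sigma}h(u)=h(a^{*})$ where $h(u):=(u+k/\sigma)^{2}\rho^{*}(u)$; note that Lemma \ref{lem:main.3}(i) combined with $\rho^{*}(a^{*})=\rho(a^{*})$ gives $\sigma^{2}h(a^{*})=\sup_{\Theta_{\mathrm{H}}}R_{sq}(\hat{\delta}_{\mathrm{H}}^{*},\cdot)$, so this suffices.

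Next I would verify the first-order condition $h'(a^{*})=0$. Define the two-argument function $\rho(u;a):=\mathbb{E}_{Y\sim N(u,1)}[(1+e^{2aY})^{-2}]$, so that $\rho^{*}(u)=\rho(u;a^{*})$ and $\rho(u)=\rho(u;u)$; the chain rule then gives $\rho^{(1)}(a^{*})-\rho^{*(1)}(a^{*})=\partial_{a}\rho(u;a)\big|_{u=a=a^{*}}$. The key observation is that for the symmetric two-point prior $\pi_{u}$ on $\{u,-u\}$, the logistic rule $\delta_{a}(y):=e^{2ay}/(e^{2ay}+1)$ has Bayes mean square regret equal to $u^{2}\rho(u;a)$ by the antisymmetry $\delta_{a}(-y)=1-\delta_{a}(y)$, and, as derived in the proof of Lemma \ref{lem:main.1}, the Bayes rule is $\delta_{u}$. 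Hence $a\mapsto\rho(u;a)$ is minimized at $a=u$ for every $u$, forcing $\partial_{a}\rho(u;a)|_{a=u}=0$ and thus $\rho^{*(1)}(a^{*})=\rho^{(1)}(a^{*})$. Combining with the FOC $2\rho(a^{*})+(a^{*}+k/\sigma)\rho^{(1)}(a^{*})=0$ established in the proof of Lemma \ref{lem:main.3}, one obtains $h'(a^{*})=(a^{*}+k/\sigma)[2\rho^{*}(a^{*})+(a^{*}+k/\sigma)\rho^{*(1)}(a^{*})]=0$.

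Finally, to upgrade this critical point to a global maximum, I would establish log-concavity of $h$ on $(-k/\sigma,\infty)$. The function $f(y):=(1+e^{2a^{*}y})^{-2}$ is strictly log-concave because $(\log f)''(y)=-8(a^{*})^{2}\delta_{a^{*}}(y)(1-\delta_{a^{*}}(y))<0$, and $\phi$ is log-concave, so Pr\'ekopa's theorem gives log-concavity of $\rho^{*}(u)=\int f(y)\phi(y-u)\,dy$ on $\mathbb{R}$. Since $(u+k/\sigma)^{2}$ is log-concave on $(-k/\sigma,\infty)$, the product $h$ is too, so it has at most one interior critical point. Together with the boundary behavior $h(-k/\sigma)=0$ and $\lim_{u\to\infty}h(u)=0$ (the exponential decay of $\rho^{*}$ dominates the polynomial factor), this forces $a^{*}$ to be the unique global maximizer, yielding the inequality $\sup_{\Theta}R_{sq}(\hat{\delta}_{\mathrm{H}}^{*},\cdot)\leq\sup_{\Theta_{\mathrm{H}}}R_{sq}(\hat{\delta}_{\mathrm{H}}^{*},\cdot)$. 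I expect the main obstacle to be the FOC step, specifically the Bayes-theoretic identification $\partial_{a}\rho(u;a)|_{a=u}=0$, since this is what bridges the two seemingly different functions $g$ (whose FOC defines $a^{*}$ in Lemma \ref{lem:main.3}) and $h$ (which governs $\sup_{\Theta}$).
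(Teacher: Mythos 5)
Your argument is correct, and it shares the paper's overall skeleton --- reduce $\sup_{\theta\in\Theta}$ by the symmetry $(\theta_e,\theta_t)\mapsto(-\theta_e,-\theta_t)$ and monotonicity in $\theta_t$ to the one-dimensional problem $\sup_{u\ge -k/\sigma}(u+k/\sigma)^2\rho^*(u)$ (the paper's Lemma \ref{lem:4}), verify that $a^*$ is a stationary point, then globalize --- but both technical sub-steps are handled by genuinely different tools. For the stationarity step, the paper's Lemma \ref{lem:6} obtains $\rho^{(1)}(a^*)=(\rho^*)^{(1)}(a^*)$ by differentiating under the integral and showing the extra term vanishes through a symmetry of the integrand; your envelope-style argument (the logistic rule $\delta_u$ is the exact Bayes rule for the symmetric two-point prior, so $a\mapsto\rho(u;a)$ is minimized at $a=u$, which kills the cross-partial in the chain rule) delivers the same identity with almost no computation and makes its decision-theoretic origin transparent. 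For globalization, the paper (Step 2 of Lemma \ref{lem:5} together with Lemma \ref{lem:7}) expresses $(g^*)^{(1)}$ as a Gaussian convolution of an explicit kernel, proves that kernel has a single sign change, and invokes the variation-diminishing property of the Gaussian family plus a separate verification that $\mathbf{g}^{(1)}(a^*)<0$; your route --- $y\mapsto(1+e^{2a^*y})^{-2}$ is log-concave, so $\rho^*$ is log-concave as its convolution with $\phi$ by Pr\'ekopa's theorem, so $h$ is log-concave on $(-k/\sigma,\infty)$ and its stationary point $a^*$ is automatically a global maximum --- is shorter and sidesteps the sign-change machinery entirely (indeed the boundary limits you invoke are not needed once $\log h$ is concave with an interior stationary point). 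The one care point is that $y\mapsto(1+e^{2a^*y})^{-2}$ is bounded but not integrable, so you should cite the version of the log-concavity-of-convolutions result that requires only finiteness of the convolution, which holds here since $\phi$ is integrable; with that noted, your proof is complete and arguably cleaner, though the paper's sign-change technology is the tool it reuses elsewhere (Lemmas \ref{lem:8} and \ref{lem:9}).
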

\begin{proof}
By Lemma \ref{lem:4}, $\sup_{\theta\in\Theta}R_{sq}(\hat{\delta}_{\text{H}}^{*},\theta)=\sigma^{2}\sup_{-\frac{k}{\sigma}\leq\tilde{a}_{e}<\infty}\left(\tilde{a}_{e}+\frac{k}{\sigma}\right)^{2}\rho^{*}\left(\tilde{a}_{e}\right)$.
By Lemma \ref{lem:main.3}, $\hat{\delta}_{\mathrm{H}}^{*}$ is a
minimax rule with respect to the hardest one-dimensional problem,
and it holds 
\begin{align*}
\sup_{\theta\in\Theta_{\mathrm{H}}}R_{sq}(\hat{\delta}_{\mathrm{H}}^{*},\theta) & =\sigma^{2}\sup_{0\leq\tilde{a}_{e}\leq1.23}(\tilde{a}_{e}+\frac{k}{\sigma})^{2}\rho\left(\tilde{a}_{e}\right)=\sigma^{2}(a^{*}+\frac{k}{\sigma})^{2}\rho\left(a^{*}\right).
\end{align*}
Furthermore, Lemma \ref{lem:5} establishes 
\[
\sup_{-\frac{k}{\sigma}\leq\tilde{a}_{e}<\infty}\left(\tilde{a}_{e}+\frac{k}{\sigma}\right)^{2}\rho^{*}\left(\tilde{a}_{e}\right)\leq({a}^{*}+\frac{k}{\sigma})^{2}\rho\left({a}^{*}\right),
\]
yielding the conclusion.
\end{proof}
\begin{lem}
$\sup_{\theta\in\Theta}R_{sq}(\hat{\delta}_{\text{H}}^{*},\theta)=\sigma^{2}\sup_{-\frac{k}{\sigma}\leq\tilde{a}_{e}<\infty}\left(\tilde{a}_{e}+\frac{k}{\sigma}\right)^{2}\rho^{*}\left(\tilde{a}_{e}\right)$.\label{lem:4}
\end{lem}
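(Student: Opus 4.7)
The plan is to reduce the two-dimensional supremum to the claimed one-dimensional form in three steps: (i) compute $R_{sq}(\hat{\delta}_{\mathrm{H}}^{*},\theta)$ explicitly by exploiting the logistic form of $\hat{\delta}_{\mathrm{H}}^{*}$; (ii) for each fixed $\theta_{e}$, optimize the risk over $\theta_{t}\in I(\theta_{e})=[\theta_{e}-k,\theta_{e}+k]$; (iii) collapse the two resulting boundary cases by a sign-flip symmetry and rescale.

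For step (i), I would standardize via $Y:=\hat{\theta}_{e}/\sigma\sim N(\theta_{e}/\sigma,1)$, so that $\hat{\delta}_{\mathrm{H}}^{*}=\exp(2a^{*}Y)/(1+\exp(2a^{*}Y))$ and $1-\hat{\delta}_{\mathrm{H}}^{*}=(1+\exp(2a^{*}Y))^{-1}$. When $\theta_{t}\geq 0$, the expectation of $(1-\hat{\delta}_{\mathrm{H}}^{*})^{2}$ is $\rho^{*}(\theta_{e}/\sigma)$ directly from the definition. When $\theta_{t}<0$, the expectation of $(\hat{\delta}_{\mathrm{H}}^{*})^{2}$ equals $\int (1+\exp(-2a^{*}y))^{-2}\phi(y-\theta_{e}/\sigma)\,dy$; the substitution $z=-y$ together with $\phi(-x)=\phi(x)$ identifies this integral with $\rho^{*}(-\theta_{e}/\sigma)$. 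In sum,
\[
R_{sq}(\hat{\delta}_{\mathrm{H}}^{*},(\theta_{e},\theta_{t}))=\begin{cases}\theta_{t}^{2}\,\rho^{*}(\theta_{e}/\sigma), & \theta_{t}\geq 0,\\ \theta_{t}^{2}\,\rho^{*}(-\theta_{e}/\sigma), & \theta_{t}<0.\end{cases}
\]

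For step (ii), I would observe that on each sign subinterval the risk is $\theta_{t}^{2}$ times a nonnegative constant in $\theta_{e}$, so its maximum over $[\theta_{e}-k,\theta_{e}+k]$ is attained at a boundary point (convexity of $\theta_{t}\mapsto\theta_{t}^{2}$). The positive branch, nonempty when $\theta_{e}\geq -k$, is maximized at $\theta_{t}=\theta_{e}+k$, yielding $(\theta_{e}+k)^{2}\rho^{*}(\theta_{e}/\sigma)$; the negative branch, nonempty when $\theta_{e}\leq k$, is maximized at $\theta_{t}=\theta_{e}-k$, yielding $(k-\theta_{e})^{2}\rho^{*}(-\theta_{e}/\sigma)$.

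For step (iii), the substitution $\theta_{e}\mapsto -\theta_{e}$ maps the negative-branch supremum over $\theta_{e}\leq k$ onto the positive-branch supremum over $\theta_{e}\geq -k$, so the two coincide, and in particular their maximum equals $\sup_{\theta_{e}\geq -k}(\theta_{e}+k)^{2}\rho^{*}(\theta_{e}/\sigma)$. Setting $\tilde{a}_{e}=\theta_{e}/\sigma$ and pulling out $\sigma^{2}$ from the squared factor then gives exactly $\sigma^{2}\sup_{\tilde{a}_{e}\geq -k/\sigma}(\tilde{a}_{e}+k/\sigma)^{2}\rho^{*}(\tilde{a}_{e})$. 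The only (mild) obstacle is carrying out the $z=-y$ change of variables in step (i) cleanly so as to identify the $\theta_{t}<0$ integrand with $\rho^{*}(-\theta_{e}/\sigma)$, and keeping track of the feasibility range of $\theta_{e}$ on each branch in step (ii); the remaining arguments are straightforward extremization.
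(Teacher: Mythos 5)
Your proposal is correct and follows essentially the same route as the paper: the paper first shows $R_{sq}(\hat{\delta}_{\mathrm{H}}^{*},-\theta)=R_{sq}(\hat{\delta}_{\mathrm{H}}^{*},\theta)$ by the same $z=-y$ change of variables and thereby restricts to $\theta_{t}\geq0$, then maximizes over $\theta_{t}\in I(\theta_{e})$ at the endpoint $\theta_{t}=\theta_{e}+k$ and rescales by $\sigma$. Your version merely reorders these steps (computing both sign branches explicitly and then identifying them via the sign flip), which is the same argument.
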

\begin{proof}
For any $\left(\begin{array}{c}
\theta_{e}\\
\theta_{t}
\end{array}\right)\in\Theta,$ note $\left(\begin{array}{c}
-\theta_{e}\\
-\theta_{t}
\end{array}\right)\in\Theta.$ Thus, consider each $\theta=\left(\begin{array}{c}
\theta_{e}\\
\theta_{t}
\end{array}\right)\in\Theta$ where $\theta_{t}\geq0$. Applying change-of-variable yields
\begin{align*}
R_{sq}(\hat{\delta}_{\text{H}}^{*},-\theta)= & \theta_{t}^{2}\mathbb{E}_{-\theta_{e}}\left[\left(\mathbf{1}\left\{ -\theta_{t}\geq0\right\} -\frac{\exp\left(2\cdotp a^{*}\cdotp\frac{\hat{\theta}_{e}}{\sigma}\right)}{\exp\left(2\cdotp a^{*}\cdotp\frac{\hat{\theta}_{e}}{\sigma}\right)+1}\right)^{2}\right]\\
= & \theta_{t}^{2}\int\left(\frac{\exp\left(2\cdotp\frac{a^{*}y}{\sigma}\right)}{\exp\left(2\cdotp\frac{a^{*}y}{\sigma}\right)+1}\right)^{2}\frac{\phi(\frac{y+\theta_{e}}{\sigma})}{\sigma}dy\\
= & \theta_{t}^{2}\int\left(\frac{\exp\left(-2\cdotp\frac{a^{*}y}{\sigma}\right)}{\exp\left(-2\cdotp\frac{a^{*}y}{\sigma}\right)+1}\right)^{2}\frac{\phi\left(\frac{-y+\theta_{e}}{\sigma}\right)}{\sigma}dy\\
= & \theta_{t}^{2}\int\left(\frac{1}{1+\exp\left(2\cdotp\frac{a^{*}y}{\sigma}\right)}\right)^{2}\frac{\phi\left(\frac{y-\theta_{e}}{\sigma}\right)}{\sigma}dy\\
= & \theta_{t}^{2}\mathbb{E}_{\theta_{e}}\left[\left(\frac{1}{1+\exp\left(2\cdotp\frac{a^{*}\hat{\theta}_{e}}{\sigma}\right)}\right)^{2}\right]\\
= & R_{sq}(\hat{\delta}_{\text{H}}^{*},\theta).
\end{align*}
Therefore, we deduce
\begin{align*}
\sup_{\theta\in\Theta}R_{sq}(\hat{\delta}_{\text{H}}^{*},\theta) & =\sup_{\theta_{e}\in\mathbb{R},\theta_{t}\in I(\theta_{e})}\theta_{t}^{2}\mathbb{E}_{\theta_{e}}\left[\left(\mathbf{1}\{\theta_{t}\geq0\}-\frac{\exp\left(2\cdotp a^{*}\cdotp\frac{\hat{\theta}_{e}}{\sigma}\right)}{\exp\left(2\cdotp a^{*}\cdotp\frac{\hat{\theta}_{e}}{\sigma}\right)+1}\right)^{2}\right],\\
 & =\max\left\{ \sup_{\theta\in\Theta,\theta_{t}\geq0}R_{sq}(\hat{\delta}_{\text{H}}^{*},\theta),\sup_{\theta\in\Theta,\theta_{t}<0}R_{sq}(\hat{\delta}_{\text{H}}^{*},\theta)\right\} ,\\
 & =\sup_{\theta\in\Theta,\theta_{t}\geq0}R_{sq}(\hat{\delta}_{\text{H}}^{*},\theta)\\
 & =\sup_{\theta\in\Theta,\theta_{t}\geq0}\theta_{t}^{2}\mathbb{E}_{\theta_{e}}\left[\left(\frac{1}{\exp\left(2\cdotp a^{*}\cdotp\frac{\hat{\theta}_{e}}{\sigma}\right)+1}\right)^{2}\right].
\end{align*}
Moreover, note
\begin{align*}
 & \sup_{\theta\in\Theta,\theta_{t}\geq0}\theta_{t}^{2}\mathbb{E}_{\theta_{e}}\left[\left(\frac{1}{\exp\left(2\cdotp a^{*}\cdotp\frac{\hat{\theta}_{e}}{\sigma}\right)+1}\right)^{2}\right]\\
 & =\sup_{-k\leq\theta_{e}<\infty}(\theta_{e}+k)^{2}\mathbb{E}_{\theta_{e}}\left[\left(\frac{1}{\exp\left(2\cdotp a^{*}\cdotp\frac{\hat{\theta}_{e}}{\sigma}\right)+1}\right)^{2}\right]\\
 & =\sigma^{2}\sup_{-\frac{k}{\sigma}\leq\frac{\theta_{e}}{\sigma}<\infty}\left(\frac{\theta_{e}}{\sigma}+\frac{k}{\sigma}\right)^{2}\rho^{*}\left(\frac{\theta_{e}}{\sigma}\right)\\
 & =\sigma^{2}\sup_{-\frac{k}{\sigma}\leq\tilde{\theta}_{e}<\infty}\left(\tilde{\theta}_{e}+\frac{k}{\sigma}\right)^{2}\rho^{*}\left(\tilde{\theta}_{e}\right),
\end{align*}
where $\rho^{*}\left(\tilde{\theta}_{e}\right)=\int\left(\frac{1}{\exp\left(2\cdotp a^{*}\cdotp y\right)+1}\right)^{2}\phi(y-\tilde{\theta}_{e})dy$. 
\end{proof}
\begin{lem}
$\sup_{-\frac{k}{\sigma}\leq\tilde{\theta}_{e}<\infty}\left(\tilde{\theta}_{e}+\frac{k}{\sigma}\right)^{2}\rho^{*}\left(\tilde{\theta}_{e}\right)\leq(a^{*}+\frac{k}{\sigma})^{2}\rho\left(a^{*}\right)$.\label{lem:5}
\end{lem}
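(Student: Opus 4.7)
The plan is to show that $h(\tilde{\theta}_e) := (\tilde{\theta}_e + k/\sigma)^2 \rho^*(\tilde{\theta}_e)$ attains its supremum on $[-k/\sigma, \infty)$ at $\tilde{\theta}_e = a^*$. Since $\rho^*(a^*) = \rho(a^*)$ from the definitions of the two functions, this yields $\sup h = (a^* + k/\sigma)^2 \rho(a^*)$, which is the bound claimed. Note that $h(-k/\sigma) = 0$ and, using the crude inequality $q_{a^*}(y)^2 \le e^{-4 a^* y}$ on $y > 0$ together with the Gaussian moment generating function, $\rho^*(\tilde{\theta}_e) = O(e^{-4 a^* \tilde{\theta}_e})$, so $h \to 0$ as $\tilde{\theta}_e \to \infty$. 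Hence the supremum is attained at an interior critical point, and it suffices to verify that $a^*$ is a critical point of $h$ and that the interior critical point is unique.

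Applying Stein's lemma to $(\rho^*)'(\tilde{\theta}_e) = \mathbb{E}_{\tilde{\theta}_e}[(Y - \tilde{\theta}_e) q_{a^*}(Y)^2]$, with $Y \sim N(\tilde{\theta}_e, 1)$, gives $(\rho^*)'(\tilde{\theta}_e) = -4 a^* \tilde{B}(\tilde{\theta}_e)$, where $\tilde{B}(\tilde{\theta}_e) := \mathbb{E}_{\tilde{\theta}_e}[q_{a^*}(Y)^2 p_{a^*}(Y)]$ and $p_{a^*} := 1 - q_{a^*}$. Consequently $h'(\tilde{\theta}_e) = 2(\tilde{\theta}_e + k/\sigma)\bigl[\rho^*(\tilde{\theta}_e) - 2 a^*(\tilde{\theta}_e + k/\sigma) \tilde{B}(\tilde{\theta}_e)\bigr]$, so the interior FOC of $h$ is equivalent to $F(\tilde{\theta}_e) = 2 a^*(\tilde{\theta}_e + k/\sigma)$, with $F := \rho^* / \tilde{B}$. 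Analogous Stein manipulations applied to the two integrals appearing in the proof of Lemma~\ref{lem:main.3}(ii) yield $\rho^{(1)}(a) = -8 a (2 B_a - 3 C_a)$, where $B_a := \mathbb{E}_a[q_a(Y)^2 p_a(Y)]$ and $C_a := \mathbb{E}_a[q_a(Y)^2 p_a(Y)^2]$ (now with both the logistic parameter and the Gaussian mean equal to $a$), so Lemma~\ref{lem:main.3}(ii)'s FOC for $a^*$ can be rewritten as $\rho(a^*) = 4 a^*(a^* + k/\sigma)(2 B_{a^*} - 3 C_{a^*})$.

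The decisive step is the identity $B_a = 2 C_a$ for every $a > 0$, equivalently $\mathbb{E}_a[q_a(Y)^2 p_a(Y)(1 - 2 p_a(Y))] = 0$. I would prove it by pairing the integrand at $y$ with its value at $-y$: the identities $q_a(-y) = p_a(y)$, $p_a(-y) = q_a(y)$, $1 - 2 p_a(-y) = -(1 - 2 p_a(y))$, together with $\phi(y - a) = e^{2 a y}\phi(y + a)$ and $q_a(y) e^{2 a y} = p_a(y)$, yield a pointwise cancellation, so the integral over $\mathbb{R}$ vanishes. Substituting $B_{a^*} = 2 C_{a^*}$ into the rewritten FOC for $a^*$ collapses it to $\rho(a^*) = 2 a^*(a^* + k/\sigma) B_{a^*}$. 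Combined with $\rho^*(a^*) = \rho(a^*)$ and $\tilde{B}(a^*) = B_{a^*}$, this is exactly $F(a^*) = 2 a^*(a^* + k/\sigma)$, so $a^*$ satisfies the FOC of $h$.

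Uniqueness of the interior critical point follows from a monotone likelihood ratio argument. The family of unnormalized densities $\{q_{a^*}(y)^2 \phi(y - \tilde{\theta}_e)\}_{\tilde{\theta}_e}$ has strict MLR in $y$, since $\phi(y - \tilde{\theta}_e')/\phi(y - \tilde{\theta}_e) \propto e^{(\tilde{\theta}_e' - \tilde{\theta}_e) y}$ is strictly increasing in $y$ whenever $\tilde{\theta}_e' > \tilde{\theta}_e$. Because $p_{a^*}$ is strictly increasing, MLR forces $\tilde{B}(\tilde{\theta}_e)/\rho^*(\tilde{\theta}_e) = 1/F(\tilde{\theta}_e)$ to be strictly increasing, so $F$ is strictly decreasing, while the right-hand side $2 a^*(\tilde{\theta}_e + k/\sigma)$ is strictly increasing. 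A limit check ($F(-k/\sigma) > 0$ while the right-hand side equals $0$ there; $F(\tilde{\theta}_e) \to 1$ while the right-hand side $\to \infty$ as $\tilde{\theta}_e \to \infty$) shows that the two curves cross exactly once, so $a^*$ is the unique interior critical point of $h$ and therefore its global maximum. The main obstacle is spotting and proving the cancellation $B_a = 2 C_a$; without this algebraic simplification $a^*$ is not a critical point of $h$ at all and the strategy breaks down, whereas once the identity is in hand the remaining FOC and MLR steps are essentially routine.
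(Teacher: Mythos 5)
Your proposal is correct, and while it shares the paper's overall architecture (show $a^{*}$ is a critical point of $g^{*}(\tilde{\theta}_{e}):=(\tilde{\theta}_{e}+\frac{k}{\sigma})^{2}\rho^{*}(\tilde{\theta}_{e})$, then show it is the global maximum), your execution of both halves differs in instructive ways. For the critical-point step, the paper's Lemma \ref{lem:6} shows $\rho^{(1)}(a^{*})=(\rho^{*})^{(1)}(a^{*})$ by proving that the extra term $F_{1}$ (from differentiating the logistic index) vanishes because its integrand is an even weight times $y$; your identity $B_{a}=2C_{a}$, proved by pairing $y$ with $-y$ via $q_{a}(-y)=p_{a}(y)$ and $\phi(y-a)=e^{2ay}\phi(y+a)$, is algebraically equivalent to $F_{1}(a)=0$ (expanding $-4\mathbb{E}_{a}[Yq_{a}^{2}p_{a}]$ by Stein's lemma gives $-8a(B_{a}-3C_{a})-4aB_{a}$, which vanishes iff $B_{a}=2C_{a}$), so this half is the same symmetry cancellation in Stein's-lemma clothing. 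The genuinely different part is the global-maximum step: the paper writes $(g^{*})^{(1)}=(\tilde{\theta}_{e}+\frac{k}{\sigma})\mathbf{g}(\tilde{\theta}_{e})$ with $\mathbf{g}(\tilde{\theta}_{e})=2\int\mathbf{w}(y)\phi(y-\tilde{\theta}_{e})dy$, shows $\mathbf{w}$ has a single sign change, and invokes the variation-diminishing property (Theorem C.1 of \citealt{kitagawa2022treatment}) in Lemma \ref{lem:7}; you instead observe that the first-order condition reads $\rho^{*}/\tilde{B}=2a^{*}(\tilde{\theta}_{e}+\frac{k}{\sigma})$ and that $\tilde{B}/\rho^{*}$ is the mean of the increasing function $p_{a^{*}}$ under a family of tilted Gaussians with strict monotone likelihood ratio, so the left side is strictly decreasing while the right side is strictly increasing, forcing a unique crossing and hence unimodality of $g^{*}$. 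Your route avoids the explicit computation of $\mathbf{w}$, $\tilde{\mathbf{w}}$, and the appeal to the external total-positivity theorem, at the cost of the boundary checks ($g^{*}(-\frac{k}{\sigma})=0$ and $g^{*}\to0$ at infinity, where your tail bound should also note that the $Y\le0$ region contributes only $O(\Phi(-\tilde{\theta}_{e}))$); of course, MLR and variation diminution both rest on the total positivity of the Gaussian location family, so the two arguments are cousins, but yours is more self-contained and arguably cleaner.
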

\begin{proof}
Recall $g(\tilde{a}_{e}):=\left(\tilde{a}_{e}+\frac{k}{\sigma}\right)^{2}\rho\left(\tilde{a}_{e}\right)$. Write $g^{*}(\tilde{\theta}_{e}):=\left(\tilde{\theta}_{e}+\frac{k}{\sigma}\right)^{2}\rho^{*}\left(\tilde{\theta}_{e}\right)$.
Note $g(a^{*})=g^{*}(a^{*})$ as $\rho^{*}\left(a^{*}\right)=\rho\left(a^{*}\right)$.
Thus, it suffices to show that $a^{*}$ solves $\sup_{-\frac{k}{\sigma}\leq\tilde{\theta}_{e}<\infty}g^{*}(\tilde{\theta}_{e})$.
We take two steps:

Step 1: we show that $a^{*}$ is a local extremum point of $g^{*}(\tilde{\theta}_{e})$.
To see this, note $a^{*}\in\arg\sup_{0\leq\tilde{a}_{e}\leq1.23}(\tilde{a}_{e}+\frac{k}{\sigma})^{2}\rho\left(\tilde{a}_{e}\right).$
By Lemma \ref{lem:main.3} (ii), $a^{*}$ is an interior point in
$[0,\tau^{*}]$. Therefore, $a^{*}$ must satisfy the following FOC
\begin{align*}
2\left(a^{*}+\frac{k}{\sigma}\right)\rho\left(a^{*}\right)+\left(a^{*}+\frac{k}{\sigma}\right)^{2}\rho^{(1)}\left(a^{*}\right) & =0.
\end{align*}
As $a^{*}+\frac{k}{\sigma}>0$, it implies 
\begin{equation}
2\rho\left(a^{*}\right)+\left(a^{*}+\frac{k}{\sigma}\right)\rho^{(1)}\left(a^{*}\right)=0.\label{pf:lem5.1}
\end{equation}

We evaluate the first derivate of $g^{*}(\cdotp)$ at $a^{*}$:
\begin{align}
\left(g^{*}\right)^{(1)}(a^{*}) & =\left(a^{*}+\frac{k}{\sigma}\right)\left[2\rho^{*}\left(a^{*}\right)+\left(a^{*}+\frac{k}{\sigma}\right)\rho^{*}{}^{(1)}\left(a^{*}\right)\right]\nonumber \\
 & =\left(a^{*}+\frac{k}{\sigma}\right)\left[2\rho\left(a^{*}\right)+\left(a^{*}+\frac{k}{\sigma}\right)\left(\rho\right)^{(1)}\left(a^{*}\right)\right]\nonumber \\
 & =0,\label{pf:lem5.2}
\end{align}
where the second equality follows from Lemma \ref{lem:6}, and from
using $\rho\left(a^{*}\right)=\rho^{*}\left(a^{*}\right)$ again,
and the third equality follows from (\ref{pf:lem5.1}). Thus, we conclude
that $a^{*}$ is also a local extremum point of $g^{*}(\cdotp)$

Step 2: we show $a^{*}$ is in fact a global maximum of the problem
$\sup_{-\frac{k}{\sigma}\leq\tilde{\theta}_{e}<\infty}g^{*}(\tilde{\theta}_{e})$.
We analyze $\left(g^{*}\right)^{(1)}(\tilde{\theta}_{e})$ more in
detail. Algebra shows
\begin{align*}
\left(g^{*}\right)^{(1)}(\tilde{\theta}_{e}) & =\left(\tilde{\theta}_{e}+\frac{k}{\sigma}\right)\mathbf{g}(\tilde{\theta}_{e}),
\end{align*}
where $\mathbf{g}(\tilde{\theta}_{e})=2\rho^{*}\left(\tilde{\theta}_{e}\right)+\left(\tilde{\theta}_{e}+\frac{k}{\sigma}\right)\left(\rho^{*}\right)^{(1)}\left(\tilde{\theta}_{e}\right)$.
As $\tilde{\theta}_{e}+\frac{k}{\sigma}\geq0$, it follows the sign
of $\left(g^{*}\right)^{(1)}(\tilde{\theta}_{e})$ only depends on
$\mathbf{g}(\tilde{\theta}_{e})$, which we further analyze below.
To this end, write $\frac{1}{1+\exp\left(2\cdotp a^{*}\cdotp y\right)}:=w^{*}(y)$.
Using integration by parts twice, it follows 
\begin{align*}
 & \mathbf{g}(\tilde{\theta}_{e})\\
= & 2\int w^{*}(y)^{2}\phi(y-\tilde{\theta}_{e})dy+(\tilde{\theta}_{e}+\frac{k}{\sigma})\int w^{*}(y)^{2}\frac{d\phi(y-\tilde{\theta}_{e})}{d\tilde{\theta}_{e}}dy\\
= & 2\int w^{*}(y)^{2}\phi(y-\tilde{\theta}_{e})dy-(\tilde{\theta}_{e}+\frac{k}{\sigma})\int w^{*}(y)^{2}\frac{d\phi(y-\tilde{\theta}_{e})}{dy}dy\\
= & 2\int w^{*}(y)^{2}\phi(y-\tilde{\theta}_{e})dy-(\theta_{e}+\frac{k}{\sigma})\int w^{*}(y)^{2}d\phi(y-\tilde{\theta}_{e})\\
= & 2\left(\int w^{*}(y)^{2}\phi(y-\tilde{\theta}_{e})dy+\int w^{*}(y)\frac{dw^{*}(y)}{dy}(\theta_{e}+\frac{k}{\sigma})\phi(y-\tilde{\theta}_{e})dy\right)\\
= & 2\left(\int w^{*}(y)^{2}\phi(y-\tilde{\theta}_{e})dy+\int w^{*}(y)\frac{dw^{*}(y)}{dy}(\theta_{e}-y)\phi(y-\tilde{\theta}_{e})dy+\int w^{*}(y)\frac{dw^{*}(y)}{dy}(\frac{k}{\sigma}+y)\phi(y-\tilde{\theta}_{e})dy\right)\\
= & 2\left(\int w^{*}(y)^{2}\phi(y-\tilde{\theta}_{e})dy+\int\phi(y-\tilde{\theta}_{e})w^{*}(y)\frac{dw^{*}(y)}{dy}d\phi(y-\tilde{\theta}_{e})+\int w^{*}(y)\frac{dw^{*}(y)}{dy}(\frac{k}{\sigma}+y)\phi(y-\tilde{\theta}_{e})dy\right)\\
= & 2\left(\int w^{*}(y)^{2}\phi(y-\tilde{\theta}_{e})dy-\int\frac{d\left(\frac{\partial w^{*}(y)}{\partial y}w^{*}(y)\right)}{dy}\phi(y-\tilde{\theta}_{e})dy+\int w^{*}(y)\frac{dw^{*}(y)}{dy}(\frac{k}{\sigma}+y)\phi(y-\tilde{\theta}_{e})dy\right)\\
= & 2\int\mathbf{w}(y)\phi(y-\tilde{\theta}_{e})dy,
\end{align*}
where 
\begin{equation}
\mathbf{w}(y)=w^{*}(y)^{2}-\left(\frac{dw^{*}(y)}{dy}\right)^{2}-\frac{d^{2}w^{*}(y)}{dy^{2}}w^{*}(y)+w^{*}(y)\frac{dw^{*}(y)}{dy}(\frac{k}{\sigma}+y).\label{pf:bold.w}
\end{equation}

Lemma \ref{lem:7} shows that $\int\mathbf{w}(y)\phi(y-\tilde{\theta}_{e})dy$
has a unique sign change from $+$ to $-$ at $a^*$, which verifies immediately
that $a^{*}$ is in fact a global maximum of the problem $\sup_{-\frac{k}{\sigma}\leq\tilde{\theta}_{e}<\infty}g^{*}(\tilde{\theta}_{e})$. 
\end{proof}
\begin{lem}
$\rho^{(1)}(a^{*})=\left(\rho^{*}\right)^{(1)}(a^{*})$ . \label{lem:6}
\end{lem}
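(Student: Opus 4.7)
The plan is to prove Lemma~\ref{lem:6} via an envelope-type argument that reinterprets $\rho^{(1)}(a^*) - (\rho^*)^{(1)}(a^*)$ as the derivative of a one-dimensional Bayes risk, evaluated at its minimizer. Define the two-variable function
\[
h(a,b) := \int \left(\frac{1}{\exp(2by)+1}\right)^2 \phi(y-a)\,dy,
\]
so that $\rho(a) = h(a,a)$ and $\rho^*(\tilde\theta_e) = h(\tilde\theta_e, a^*)$. Dominated convergence justifies differentiation under the integral, and the chain rule then yields $\rho^{(1)}(a^*) = \partial_1 h(a^*,a^*) + \partial_2 h(a^*,a^*)$ while $(\rho^*)^{(1)}(a^*) = \partial_1 h(a^*,a^*)$. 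Thus the lemma is equivalent to showing $\partial_2 h(a^*,a^*) = 0$.

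Next, write $w(y;b) := 1/(\exp(2by)+1)$ and exploit the symmetries $w(-y;b) = 1 - w(y;b)$ and $\phi(-y) = \phi(y)$. The substitution $y \mapsto -y$ in $\int (1-w(y;b))^2 \phi(y+a^*)\,dy$ returns $h(a^*,b)$, so
\[
2h(a^*,b) = \int w(y;b)^2 \phi(y-a^*)\,dy + \int (1-w(y;b))^2 \phi(y+a^*)\,dy.
\]
Multiplying by $(a^*)^2/2$, the right-hand side is precisely the Bayes mean square regret $r_B(\hat\delta_b;\pi_{a^*})$ of the parametric rule $\hat\delta_b(Y) := 1 - w(Y;b) = \exp(2bY)/(\exp(2bY)+1)$ under the prior $\pi_{a^*}$ that places equal mass on $\pm a^*$ in the point-identified stylized problem of Example~\ref{eg:1}. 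That is, $r_B(\hat\delta_b;\pi_{a^*}) = (a^*)^2\, h(a^*, b)$.

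Finally, a direct posterior-risk minimization (as in the proof of Lemma~\ref{lem:main.1}) shows that the unique Bayes rule under $\pi_{a^*}$ is $\exp(2a^* Y)/(\exp(2a^* Y)+1)$, which is exactly $\hat\delta_b$ at $b = a^*$. Hence $b \mapsto r_B(\hat\delta_b;\pi_{a^*})$ is minimized on $\mathbb{R}$ at the interior point $b = a^*$, so the first-order condition gives
\[
(a^*)^2\, \partial_2 h(a^*,a^*) = \left.\frac{d\, r_B(\hat\delta_b;\pi_{a^*})}{db}\right|_{b=a^*} = 0.
\]
Since $a^* > 0$ by Lemma~\ref{lem:main.3}(ii), this forces $\partial_2 h(a^*,a^*) = 0$, completing the proof.

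The main obstacle is conceptual: recognizing the symmetry-based rewriting that expresses $h(a^*,b)$ as a Bayes risk and thus invites the envelope argument. A brute-force attempt, differentiating $\rho$ directly under the integral, produces residual integrals such as $\int y\, w(y;a^*)^2 (1-w(y;a^*))\, \phi(y-a^*)\,dy$ whose vanishing is opaque; the envelope interpretation bypasses this algebra by invoking the optimality of the Bayes rule inside the one-parameter logistic family $\{\hat\delta_b\}_{b\in\mathbb{R}}$.
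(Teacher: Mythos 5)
Your proof is correct, and it takes a genuinely different route from the paper's. Both arguments in effect reduce the claim to showing that $\partial_2 h(a^*,a^*)=0$ for $h(a,b)=\int w(y;b)^2\phi(y-a)\,dy$: the paper does this implicitly by differentiating $\rho$ under the integral and isolating exactly that term as $F_1(\tilde\theta_e)$, which it then kills by a direct symmetry computation --- substituting $\exp(2\tilde\theta_e y)=\phi(y-\tilde\theta_e)/\phi(y+\tilde\theta_e)$ turns the integrand into $y$ times an even function of $y$, so $F_1(\tilde\theta_e)=0$ for \emph{every} $\tilde\theta_e$, not just at $a^*$. You instead identify $(a^*)^2 h(a^*,b)$ with the Bayes mean square regret of the logistic rule with slope $2b$ against the symmetric two-point prior on $\pm a^*$, note that the Bayes rule for that prior is the member of the family with $b=a^*$ (a pointwise quadratic minimization, exactly as in the proof of Lemma \ref{lem:main.1}), and read off $\partial_2 h(a^*,a^*)=0$ from the first-order condition at the interior minimizer, using $a^*>0$ from Lemma \ref{lem:main.3}(ii) to divide by $(a^*)^2$. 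The envelope argument is more conceptual: it explains \emph{why} the two derivatives agree (perturbing the slope of a Bayes rule has no first-order effect on its Bayes risk), and since it applies with any $a>0$ in place of $a^*$ it recovers the paper's stronger conclusion $F_1\equiv 0$ on $(0,\infty)$. The paper's computation is more elementary and self-contained, handles $a=0$ as well, and --- contrary to your closing remark --- the residual integral you call opaque is dispatched in one line by the odd-times-even observation after the Gaussian-ratio substitution, so the two proofs are closer in content than they first appear.
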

\begin{proof}
Note for all $\tilde{\theta}_{e}\in\mathbb{R}$:

\begin{align*}
\left(\rho^{*}\right)^{(1)}\left(\tilde{\theta}_{e}\right) & =-\int\left(\frac{1}{\exp\left(2\cdotp a^{*}\cdotp y\right)+1}\right)^{2}\phi^{(1)}(y-\tilde{\theta}_{e})dy,
\end{align*}
while algebra shows 
\begin{align*}
\rho^{(1)}(\tilde{\theta}_{e}) & =F_{1}(\tilde{\theta}_{e})-\int\left(\frac{1}{\exp\left(2\tilde{\theta}_{e}y\right)+1}\right)^{2}\phi^{(1)}\left(x-\tilde{\theta}_{e}\right)dy,
\end{align*}
where $F_{1}(\tilde{\theta}_{e})=-4\int\frac{\exp\left(2\tilde{\theta}_{e}y\right)\phi\left(y-\tilde{\theta}_{e}\right)}{\left(\exp\left(2\tilde{\theta}_{e}y\right)+1\right)^{3}}ydy$.
We can further verify that $F_{1}(\tilde{\theta}_{e})=-4\int w_{\tilde{\theta}_{e}}(y)ydy$,
where 
\[
w_{\tilde{\theta}_{e}}(y)=\frac{\phi^{2}\left(y-\tilde{\theta}_{e}\right)\phi^{2}\left(y+\tilde{\theta}_{e}\right)}{\left(\phi\left(y-\tilde{\theta}_{e}\right)+\phi\left(y+\tilde{\theta}_{e}\right)\right)^{3}}
\]
is such that $w_{\tilde{\theta}_{e}}(y)=w_{\tilde{\theta}_{e}}(-y)$
for all $y$. Thus, it holds $F_{1}(\tilde{\theta}_{e})=0$ for all
$\tilde{\theta}_{e}\in\mathbb{R}$. It then holds 
\begin{align*}
\rho^{(1)}(\tilde{\theta}_{e}) & =-\int\left(\frac{1}{\exp\left(2\tilde{\theta}_{e}y\right)+1}\right)^{2}\phi^{(1)}\left(x-\tilde{\theta}_{e}\right)dy.
\end{align*}

Evaluating $\left(\rho^{*}\right)^{(1)}\left(\tilde{\theta}_{e}\right)$
and $\rho^{(1)}(\tilde{\theta}_{e})$ at $a^{*}$ yields the conclusion.
\end{proof}
\begin{lem}
$\mathbf{g}(\tilde{\theta}_{e})$ has a unique sign change from $+$
to $-$ at $a^{*}$.\label{lem:7}
\end{lem}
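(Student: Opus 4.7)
The plan is to exploit the representation $\mathbf{g}(\tilde\theta_e)=2\int \mathbf{w}(y)\phi(y-\tilde\theta_e)\,dy$, which is a Gaussian convolution of the fixed function $\mathbf{w}(y)$ defined in (\ref{pf:bold.w}). My strategy has three moves: (i) show $\mathbf{w}(y)$ has exactly one sign change on $\mathbb{R}$, from $+$ to $-$; (ii) lift this to $\mathbf{g}$ via the variation-diminishing property of Gaussian convolution; (iii) locate the sign change at $a^{*}$ using the equality $\mathbf{g}(a^{*})=0$ already established in the proof of Lemma \ref{lem:5}.

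For step (i), I would substitute the closed-form derivatives $\frac{dw^{*}}{dy}=-2a^{*}w^{*}(1-w^{*})$ and $\frac{d^{2}w^{*}}{dy^{2}}=4(a^{*})^{2}w^{*}(1-w^{*})(1-2w^{*})$ into (\ref{pf:bold.w}) and factor out the strictly positive term $(w^{*}(y))^{2}$. This reduces $\mathrm{sign}(\mathbf{w}(y))$ to the sign of
\[
h(y):=1-4(a^{*})^{2}(1-w^{*}(y))(2-3w^{*}(y))-2a^{*}(1-w^{*}(y))\!\left(\tfrac{k}{\sigma}+y\right).
\]
Since $1-w^{*}(y)$ vanishes exponentially as $y\to-\infty$, we have $h(y)\to 1>0$; and since $w^{*}(y)\to 0$ as $y\to+\infty$ while the linear-in-$y$ term dominates, $h(y)\to-\infty$. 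Thus $h$ (and hence $\mathbf{w}$) has at least one $+$-to-$-$ sign change. To rule out further crossings, I would show that $h'(y)<0$ at every zero of $h$, which forces only $+$-to-$-$ transitions and hence at most one sign change overall. Reparametrizing by $v=w^{*}(y)\in(0,1)$ recasts this as a one-dimensional inequality to be checked on the locus $H(v)=0$, using $a^{*}>0$, $k/\sigma>0$, and $a^{*}<\tau^{*}$ from Lemma \ref{lem:main.3}(ii).

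Step (ii) invokes the classical variation-diminishing property: $\phi$ is a P\'olya frequency function of order infinity, so $\mathbf{g}=2\,\mathbf{w}\ast\phi$ has no more sign changes than $\mathbf{w}$, and their order is preserved. Consequently, $\mathbf{g}$ has at most one sign change and, if present, it is from $+$ to $-$. Step (iii) pins the location: the proof of Lemma \ref{lem:5} already yields $\mathbf{g}(a^{*})=0$ via the first-order condition for $a^{*}$, and $\mathbf{g}$ is real-analytic in $\tilde\theta_e$ (a Gaussian convolution of an integrable function) so it cannot vanish identically on any interval. Its behavior as $\tilde\theta_e\to\pm\infty$ then confirms it is not one-signed, so the unique sign change must occur precisely at $a^{*}$.

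The main obstacle is step (i): showing $h$ crosses zero exactly once. Because $h$ mixes polynomial-in-$w^{*}$ terms with a linear-in-$y$ piece weighted by $(1-w^{*}(y))$, it is not globally monotone in $y$, and one cannot simply differentiate once and be done. The cleanest route appears to be verifying $h'(y)<0$ at every interior zero of $h$, which demands careful sign bookkeeping and leans on the bound $a^{*}<\tau^{*}\approx 1.23$ supplied by Lemma \ref{lem:main.3}(ii) to control the quadratic-in-$a^{*}$ contribution.
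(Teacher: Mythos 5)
Your overall architecture --- factor $\mathbf{w}$, count its sign changes, push the count through the Gaussian convolution by a variation-diminishing argument, then locate the crossing at $a^{*}$ --- is exactly the paper's, and your algebra is correct: indeed $\mathbf{w}=(w^{*})^{2}h$ with $h=(1-w^{*})\tilde{\mathbf{w}}$ in the paper's notation. But there are two genuine gaps. First, the step you yourself flag as the main obstacle --- that $h$ crosses zero exactly once --- is left unresolved, and your proposed route (checking $h'(y)<0$ at every zero of $h$) is harder than necessary. The paper's fix is to factor out $(w^{*}(y))^{2}\bigl(1-w^{*}(y)\bigr)$ rather than only $(w^{*}(y))^{2}$: the remaining factor $\tilde{\mathbf{w}}(y)=\frac{1}{1-w^{*}(y)}-3(2a^{*})^{2}\bigl(1-w^{*}(y)\bigr)+(2a^{*})^{2}-2a^{*}\bigl(\frac{k}{\sigma}+y\bigr)$ is \emph{globally} strictly decreasing in $y$ (each $y$-dependent piece is decreasing because $1-w^{*}$ is increasing), runs from $+\infty$ to $-\infty$, and therefore has exactly one sign change with no case analysis. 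Dividing by the extra positive factor $1-w^{*}$ is precisely what restores global monotonicity; your $h$ itself is not monotone, as you note.

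Second, your step (iii) has a logical hole: knowing that $\mathbf{g}$ has at most one sign change, is not one-signed, and satisfies $\mathbf{g}(a^{*})=0$ does not place the sign change at $a^{*}$. A real-analytic function can have an even-order zero at $a^{*}$ (touching zero from above, say) and perform its single sign flip at some other point; ``sign change'' in the variation-diminishing sense ignores such tangential zeros. The paper closes this by proving $\mathbf{g}^{(1)}(a^{*})<0$, i.e., the zero at $a^{*}$ is transversal, which forces the unique crossing to occur at $a^{*}$ and to be from $+$ to $-$. That computation is itself nontrivial: it writes $\mathbf{g}^{(1)}(a^{*})=2\int w_{a^{*}}(y)\tilde{\mathbf{w}}(y)y\,dy$ with $w_{a^{*}}$ positive and even and $\tilde{\mathbf{w}}$ strictly decreasing, and signs the integral by splitting the real line at $\pm t^{*}$, where $t^{*}$ is the unique zero of $\tilde{\mathbf{w}}$. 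You would need to supply an argument of this kind (or an equivalent transversality check) for the lemma as stated.
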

\begin{proof}Note by Lemma \ref{lem:5}, $\mathbf{g}(\tilde{\theta}_{e})=2\int\mathbf{w}(y)\phi(y-\tilde{\theta}_{e})dy$,
where $\mathbf{w}(y)$ is defined in (\ref{pf:bold.w}). Also,
\begin{align*}
w^{*}(y) & =\frac{1}{1+\exp\left(2\cdotp a^{*}\cdotp y\right)},\\
\frac{dw^{*}(y)}{dy} & =-\left(w^{*}(y)\right)^{2}\exp\left(2\cdotp a^{*}\cdotp y\right)2a^{*},\\
\frac{d^{2}w^{*}(y)}{dy^{2}} & =2\left(w^{*}(y)\right)^{3}\left(\exp\left(2\cdotp a^{*}\cdotp y\right)2a^{*}\right)^{2}-\left(w^{*}(y)\right)^{2}\exp\left(2\cdotp a^{*}\cdotp y\right)\left(2a^{*}\right)^{2},\\
\hat{\delta}_{\mathrm{H}}^{*}(y) & =w^{*}(y)\exp\left(2\cdotp a^{*}\cdotp y\right).
\end{align*}
Thus, 
\begin{align*}
\mathbf{w}(y) & =w^{*}(y)^{2}-3\left(w^{*}(y)\right)^{4}\left(\exp\left(2\cdotp a^{*}\cdotp y\right)2a^{*}\right)^{2}\\
 & +\left(w^{*}(y)\right)^{3}\exp\left(2\cdotp a^{*}\cdotp y\right)\left(2a^{*}\right)^{2}\\
 & -\left(w^{*}(y)\right)^{3}\exp\left(2\cdotp a^{*}\cdotp y\right)2a^{*}(\frac{k}{\sigma}+y)\\
 & =w^{*}(y)^{2}\hat{\delta}_{\mathrm{H}}^{*}(y)\left\{ \frac{1}{\hat{\delta}_{\mathrm{H}}^{*}(y)}-3\hat{\delta}_{\mathrm{H}}^{*}(y)\left(2a^{*}\right)^{2}+\left(2a^{*}\right)^{2}-2a^{*}(\frac{k}{\sigma}+y)\right\} \\
 & =w^{*}(y)^{2}\hat{\delta}_{\mathrm{H}}^{*}(y)\tilde{\mathbf{w}}(y)
\end{align*}
where 
\begin{align}
\tilde{\mathbf{w}}(y) & =\frac{1}{\hat{\delta}_{\mathrm{H}}^{*}(y)}-3\hat{\delta}_{\mathrm{H}}^{*}(y)\left(2a^{*}\right)^{2}+\left(2a^{*}\right)^{2}-2a^{*}\left(\frac{k}{\sigma}+y\right).\label{pf:lem7.1}
\end{align}
As $w^{*}(y)^{2}\hat{\delta}_{\mathrm{H}}^{*}(y)>0$, the sign of
$\mathbf{w}(y_{1})$ is determined by $\tilde{\mathbf{w}}(y_{1})$.
It is straightforward to verify that 
\[
\frac{d\tilde{\mathbf{w}}(y)}{dy}=-\left(\frac{1}{\left(\hat{\delta}_{\mathrm{H}}^{*}(y)\right)^{2}}+12\left(a^{*}\right)^{2}\right)\frac{d\hat{\delta}_{\mathrm{H}}^{*}(y)}{dy}-2a^{*}<0.
\]
Thus, it holds that $\tilde{\mathbf{w}}(y)$ is strictly decreasing
and has at most one sign change from $+$ to $-$. Moreover, note
$\lim_{y\rightarrow-\infty}\tilde{\mathbf{w}}(y)=\infty$, and $\lim_{y\rightarrow\infty}\tilde{\mathbf{w}}(y)=-\infty$.
Thus, $\tilde{\mathbf{w}}(y)$ has one and only one sign change from
$+$ to $-$, implying that $\mathbf{w}(y)$ has one and only one
sign change from $+$ to $-$ as well. It follows from \citet[Theorem C.1(i), ][]{kitagawa2022treatment} that $\mathbf{g}(\tilde{\theta}_{e})$ at
most has one sign change. 

Next, we show that $\mathbf{g}(\tilde{\theta}_{e})$ indeed has one
sign change at $a^{*}$. To this end, note
\begin{align*}
\mathbf{g}(a^{*}) & =2\int\mathbf{w}(y)\phi(y-a^{*})dy=0\\
\mathbf{g}^{(1)}(a^{*}) & =2\int\mathbf{w}(y)\left(y-a^{*}\right)\phi(y-a^{*})dy\\
 & =2\int\mathbf{w}(y)y\phi(y-a^{*})dy-\underset{0}{\underbrace{2a^{*}\int\mathbf{w}(y)\phi(y-a^{*})dy}}\\
 & =2\int\mathbf{w}(y)y\phi(y-a^{*})dy.
\end{align*}
Algebra shows
\begin{align*}
\mathbf{w}(y) & =w^{*}(y)^{2}\hat{\delta}_{\mathrm{H}}^{*}(y)\tilde{\mathbf{w}}(y)\\
 & =\left(1-\hat{\delta}_{\mathrm{H}}^{*}(y)\right)^{2}\hat{\delta}_{\mathrm{H}}^{*}(y)\tilde{\mathbf{w}}(y)\\
 & =\left(\frac{\phi\left(y+a^{*}\right)}{\phi\left(y-a^{*}\right)+\phi\left(y+a^{*}\right)}\right)^{2}\frac{\phi\left(y-a^{*}\right)}{\phi\left(y-a^{*}\right)+\phi\left(y+a^{*}\right)}\tilde{\mathbf{w}}(y),
\end{align*}
Thus,
\begin{align*}
\mathbf{g}^{(1)}(a^{*})= & 2\int w_{a^{*}}(y)\tilde{\mathbf{w}}(y)ydy,
\end{align*}
where $w_{a^{*}}(y)=\frac{\phi^{2}\left(y-a^{*}\right)\phi^{2}\left(y+a^{*}\right)}{\left(\phi\left(y-a^{*}\right)+\phi\left(y+a^{*}\right)\right)^{3}}>0$
and is such that $w_{a^{*}}(-y)=w_{a^{*}}(y)$ for all $y\in\mathbb{R}$,
and $\tilde{\mathbf{w}}(y)$ is strictly decreasing from $+\infty$
to $-\infty$. Let $t^{*}$ be the unique point such that $\tilde{\mathbf{w}}(t^{*})=0$.
Suppose $t^{*}\geq0$. Then, we have the following decomposition
\begin{align*}
\mathbf{g}^{(1)}(a^{*}) & =2\int_{y<-t^{*}}w_{a^{*}}(y)\tilde{\mathbf{w}}(y)ydy\\
 & +2\int_{-t^{*}\leq y<t^{*}}w_{a^{*}}(y)\tilde{\mathbf{w}}(y)ydy\\
 & +2\int_{y>t^{*}}w_{a^{*}}(y)\tilde{\mathbf{w}}(y)ydy,
\end{align*}
where all three terms above can be signed to be negative. A similar
decomposition also reveals that $\mathbf{g}^{(1)}(a^{*})<0$ holds
true when $t^{*}<0$. Thus, we we conclude that $\mathbf{g}^{(1)}(a^{*})<0$
and $a^{*}$ is indeed a sign change of $\mathbf{g}$. Then, we apply
\citet[Theorem C.1(i), ][]{kitagawa2022treatment} to conclude that $\mathbf{g}(\tilde{\theta}_{e})$
indeed has one and only on sign change at $a^{*}$. Furthermore,  \citet[Theorem C.1(ii), ][]{kitagawa2022treatment} implies that $\mathbf{g}(\tilde{\theta}_{e})$
and $\mathbf{w}(y)$ in the same order. The conclusion follows.
\end{proof}
\begin{lem}
\label{lem:8}Let $0<c<\tau^{*}$. Then, it holds $\sup_{\tau\in[-c,c]}R_{sq}(\hat{\delta}_{\pi_{c}},\tau)=c^{2}\rho(c)$.
\end{lem}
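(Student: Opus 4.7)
The plan is to reduce to a one-sided claim by symmetry, establish unimodality of the Bayes rule's risk on $(0, \infty)$ via log-concavity, and then locate the unique maximizer using the defining property of $\tau^{*}$.

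First, I exploit symmetry. The reflection identity $1 - \hat{\delta}_{\pi_{c}}(y) = \hat{\delta}_{\pi_{c}}(-y)$ combined with $\phi(-z) = \phi(z)$ yields, via the change of variables $y \mapsto -y$, $R_{sq}(\hat{\delta}_{\pi_{c}}, \tau) = R_{sq}(\hat{\delta}_{\pi_{c}}, -\tau)$, in direct parallel to Lemma \ref{lem:4}. It therefore suffices to show that
\[
G(\tau) \, := \, R_{sq}(\hat{\delta}_{\pi_{c}}, \tau) \, = \, \tau^{2} H(\tau), \qquad H(\tau) \, := \, \int w^{2}(y)\phi(y-\tau)\,dy, \quad w(y) \, := \, \frac{1}{1+\exp(2cy)},
\]
attains its supremum on $[0, c]$ at $\tau = c$, with value $c^{2}\rho(c)$.

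Second, I argue that $G$ is strictly log-concave on $(0, \infty)$. A direct computation gives $(\log w^{2})''(y) = -8c^{2} w(y)(1-w(y)) < 0$, so $w^{2}$ is strictly log-concave. Since $\phi$ is strictly log-concave, Pr\'ekopa--Leindler implies that $H = w^{2} * \phi$ is log-concave, so $(\log H)''(\tau) \leq 0$ on $\mathbb{R}$. Hence on $(0, \infty)$,
\[
(\log G)''(\tau) \, = \, -\frac{2}{\tau^{2}} + (\log H)''(\tau) \, < \, 0.
\]
Together with $G(0) = 0$ and $G(\tau) \to 0$ as $\tau \to \infty$, strict log-concavity guarantees a unique interior maximizer $\tau^{\dagger} = \tau^{\dagger}(c) \in (0, \infty)$.

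Third, I verify $\tau^{\dagger}(c) \geq c$ for $c < \tau^{*}$, which by log-concavity is equivalent to $G'(c) \geq 0$. Integration by parts in $y$ using $(y-\tau)\phi(y-\tau) = -\frac{d}{dy}\phi(y-\tau)$ and $w'(y) = -2c\, w(y)(1-w(y))$ yields
\[
G'(\tau) \, = \, 2\tau \int w^{2}(y) \bigl[\,1 - 2c\tau(1-w(y))\,\bigr] \phi(y-\tau)\,dy,
\]
so $G'(c)\geq 0$ reduces to $\rho(c) \geq 2c^{2}\int w^{2}(y)(1-w(y))\phi(y-c)\,dy$. I would verify this by comparing $G'(c)$ with $\frac{d}{d\tau}[\tau^{2}\rho(\tau)]\bigr|_{\tau=c}$, which is strictly positive for $c < \tau^{*}$ by the definition of $\tau^{*}$. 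A chain rule computation on $\rho'(c)$ analogous to Lemma \ref{lem:6}, accounting for the dual dependence of $\rho(c)$ on $c$ (through both the Bayes rule's prior and the sampling distribution), expresses the discrepancy $G'(c) - \frac{d}{d\tau}[\tau^{2}\rho(\tau)]|_{\tau=c}$ as an explicit correction integral. Its sign can be pinned down at the checkpoint $c = \tau^{*}$, where $G'(\tau^{*}) = 0$ because Lemma \ref{lem:main.1} with $c = \infty$ forces $\tau^{*}$ to be an interior maximizer of $G_{\tau^{*}}$, and then propagated to $c \leq \tau^{*}$ by a monotonicity argument in $c$.

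The main technical obstacle is this final boundary sign check. Unlike the log-concavity argument, which is a purely formal consequence of convolution identities, verifying $G'(c) \geq 0$ for $c < \tau^{*}$ genuinely uses the defining property of $\tau^{*}$ and requires careful bookkeeping of how $\rho(c)$ depends implicitly on $c$ through both the prior and the sampling distribution.
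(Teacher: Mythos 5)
Your symmetry reduction and your unimodality argument are correct, and the latter is a genuinely different (and arguably cleaner) route than the paper's: where the paper establishes that $(g_c^*)^{(1)}$ has a unique sign change by invoking the variation-diminishing theorem of \citet[Theorem C.1]{kitagawa2022treatment}, you get the same unimodality from log-concavity of the convolution $H=w^2*\phi$ plus the factor $\tau^2$, which is a purely structural argument requiring no sign-change counting. The computation $(\log w^2)''(y)=-8c^2w(y)(1-w(y))<0$ and the formula $G'(\tau)=2\tau\int w^2(y)\left[1-2c\tau(1-w(y))\right]\phi(y-\tau)\,dy$ both check out.

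The genuine gap is in your Step 3, which you yourself flag as the obstacle but do not actually close. Two specific problems. First, the claim that $\frac{d}{d\tau}[\tau^2\rho(\tau)]\big|_{\tau=c}>0$ for $c<\tau^*$ does \emph{not} follow ``by the definition of $\tau^*$'': $\tau^*$ being the argmax of $\tau^2\rho(\tau)$ says nothing about the sign of the derivative away from $\tau^*$ unless you separately establish that $\tau^2\rho(\tau)$ is unimodal on $[0,\infty)$ -- and note that $\rho$ is \emph{not} a convolution of a fixed log-concave kernel (the parameter enters both the rule and the sampling density), so your log-concavity trick does not apply to it; this unimodality is a nontrivial fact that must be imported from \citet{kitagawa2022treatment}. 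Second, the ``correction integral'' accounting for the dual dependence of $\rho$ on its argument, and the ``propagation by monotonicity in $c$'', are left entirely unspecified; as written there is no argument there. The good news is that your plan can be completed more easily than you anticipate: the correction integral is identically zero, because the term coming from differentiating the rule is $-4\int\frac{e^{2cy}y}{(e^{2cy}+1)^3}\phi(y-c)\,dy$, which vanishes by the symmetry $w_{c}(y)=w_{c}(-y)$ of the integrand's even part -- this is precisely the computation in Lemma \ref{lem:6} of the paper ($F_1\equiv 0$). Hence $G'(c)=\frac{d}{d\tau}[\tau^2\rho(\tau)]\big|_{\tau=c}$ exactly, and the sign follows once unimodality of $\tau^2\rho$ is cited; no checkpoint at $c=\tau^*$ or propagation argument is needed. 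The paper instead sidesteps this entire issue by a proof by contradiction: if $(g_c^*)^{(1)}(c)<0$ then unimodality forces $g_c^*(c)>g_c^*(\tau^*)$, and the two monotonicity comparisons of Lemma \ref{lem:9} convert this into $g_{\tau^*}^*(c)>g_{\tau^*}^*(\tau^*)$, contradicting the global optimality of $\tau^*$ from Lemma \ref{lem:main.1} with $c=\infty$. Your route, once repaired, trades Lemma \ref{lem:9} for the identity $F_1\equiv 0$ plus the unimodality of $\tau^2\rho$; the paper's route trades the chain-rule bookkeeping for the two auxiliary comparisons.
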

\begin{proof}
By a symmetry argument, it can be shown that $R_{sq}(\hat{\delta}_{\pi_{c}},\tau)=R_{sq}(\hat{\delta}_{\pi_{c}},-\tau)$
for all $\tau$. Thus, 
\[
\sup_{\tau\in[-c,c]}R_{sq}(\hat{\delta}_{\pi_{c}},\tau)=\sup_{\tau\in[0,c]}g_{c}^{*}(\tau),
\]
where we define $g_{c}^{*}(\tau):=\tau^{2}\rho_{c}^{*}(\tau)$, and $\rho_{c}^{*}(\tau):=\int\left(\frac{1}{\exp\left(2\cdotp c\cdotp y\right)+1}\right)^{2}\phi\left(y-\tau\right)dy$.
As $g_{c}^{*}(c)=c^{2}\rho(c)$, it suffices to show that 
\[
c\in\arg\sup_{\tau\in[0,c]}g_{c}^{*}(\tau).
\]
Below we show that $g_{c}^{*}(\cdotp)$ is increasing in $[0,c]$,
and the conclusion will follow. We take two steps.

Step 1: show $\left(g_{c}^{*}\right)(\cdot)$ is first increasing and
then decreasing in $[0,\infty)$. Note $\left(g_{c}^{*}\right)(\cdot)$ may be analyzed by using the same technique employed in \citet[Lemma C.5,][]{kitagawa2022treatment}. That is, by  first re-writing $\left(g_{c}^{*}\right)^{(1)}(\cdot)$ using change-of-variable twice, and then invoking \citet[Theorem C.1, ][]{kitagawa2022treatment}, we can conclude that $\left(g_{c}^{*}\right)^{(1)}(\cdot)$
at most has one sign change in $[0,\infty)$. Furthermore, note $g_{c}^{*}(0)=0$, $\lim_{\tau\rightarrow \infty}g_{c}^{*}(\tau)=0$, and $g_{c}^{*}(\tau)>0$ at any $0<\tau<\infty$. As $g_{c}^{*}$ is a continuous and differentiable function, there must exist some $0<x<\infty$ such that  $g_{c}^{*}(x)\geq g_{c}^{*}(\tau)$ for all $\tau\in[0,\infty)$ with the inequality strict for some  $\tau\in[0,x)$ and $\tau\in(x,\infty)$. Thus, $\left(g_{c}^{*}\right)^{(1)}(\cdot)$
at least has one sign change in $[0,\infty)$. Applying \citet[Theorem C.1, ][]{kitagawa2022treatment}, we conclude that $\left(g_{c}^{*}\right)^{(1)}(\cdot)$
has a unique sign change from + to $-$ in $[0,\infty)$, implying
that $\left(g_{c}^{*}\right)(\tau)$ is first increasing and then
decreasing in $[0,\infty)$.

Step 2: show $\left(g_{c}^{*}\right)^{(1)}(c)\geq0$. Suppose not.
Then, by the conclusion from the first step, it must hold that $\left(g_{c}^{*}\right)^{(1)}(c)<0$
and 
\[
\left(g_{c}^{*}\right)(c)>\left(g_{c}^{*}\right)(\tau^{*}),
\]
as $c<\tau^{*}$. Furthermore, by Lemma \ref{lem:9}, we know $\left(g_{c}^{*}\right)(\tau^{*})>\left(g_{\tau^{*}}^{*}\right)(\tau^{*})$,
and $\left(g_{c}^{*}\right)(c)<\left(g_{\tau^{*}}^{*}\right)(c)$
for all $0<c<\tau^{*}$, implying 
\begin{equation}
\left(g_{\tau^{*}}^{*}\right)(c)>\left(g_{\tau^{*}}^{*}\right)(\tau^{*}).\label{pf:lem.8.1}
\end{equation}
However, we know it must hold that $\left(g_{\tau^{*}}^{*}\right)(\tau^{*})>\left(g_{\tau^{*}}^{*}\right)(c)$
as $\left(g_{\tau^{*}}^{*}\right)(\tau^{*})$ corresponds to the worst-case mean square regret of the global minimax optimal rule. Therefore, it must hold that $\left(g_{c}^{*}\right)^{(1)}(c)\geq0$.
And we conclude that $g_{c}^{*}(\cdotp)$ is increasing in $[0,c]$
by combining steps 1 and 2.
\end{proof}
\begin{lem}\label{lem:9} \quad

\begin{itemize}
\item[(i)] $\left(g_{c}^{*}\right)(\tau^{*})>\left(g_{\tau^{*}}^{*}\right)(\tau^{*})$ for all $0<c<\tau^{*}$;

\item[(ii)] $\left(g_{c}^{*}\right)(c)<\left(g_{\tau^{*}}^{*}\right)(c)$
for all $0<c<\tau^{*}$.
\end{itemize}
\end{lem}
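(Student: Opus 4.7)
The plan is to recognize $g_c^*(\tau)$, for any $\tau>0$, as the Bayes mean square regret $r(\hat{\delta}_c, \pi_\tau)$ of the rule $\hat{\delta}_c(y):=\exp(2cy)/(\exp(2cy)+1)$ under the symmetric two-point prior $\pi_\tau$ that assigns mass $1/2$ to each of $\pm \tau$, and then to invoke strict Bayes optimality of $\hat{\delta}_\tau$ for $\pi_\tau$, a fact already established inside the proof of Lemma \ref{lem:main.1}. Once this identification is in hand, both parts of the lemma drop out by simple specialization.

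To justify the identification, I would use the symmetry $\hat{\delta}_c(-y) = 1-\hat{\delta}_c(y)$ together with the change of variable $y\mapsto -y$ to obtain
\[
\mathbb{E}_{\bar{Y}_1\sim N(-\tau,1)}\bigl[\hat{\delta}_c(\bar{Y}_1)^2\bigr] = \int (1-\hat{\delta}_c(y))^2 \phi(y-\tau)\,dy = \rho_c^*(\tau),
\]
which, combined with $R_{sq}(\hat{\delta}_c,\tau) = \tau^2\rho_c^*(\tau)$ for $\tau>0$, yields $r(\hat{\delta}_c,\pi_\tau) = \tfrac{1}{2}R_{sq}(\hat{\delta}_c,\tau) + \tfrac{1}{2}R_{sq}(\hat{\delta}_c,-\tau) = \tau^2\rho_c^*(\tau) = g_c^*(\tau)$. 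The same pointwise Bayes calculation that appears in the proof of Lemma \ref{lem:main.1} identifies the unique Bayes rule for $\pi_\tau$ as $\hat{\delta}_\tau$; uniqueness is strict because, for every realization $\bar{Y}_1\in\mathbb{R}$, the posterior assigns positive mass to both $\tau$ and $-\tau$, so the quadratic in $\hat{\delta}(\bar{Y}_1)$ being minimized pointwise has a single minimizer. Therefore, for any $c\neq\tau$ with $\tau>0$,
\[
g_c^*(\tau) = r(\hat{\delta}_c,\pi_\tau) \;>\; r(\hat{\delta}_\tau,\pi_\tau) = g_\tau^*(\tau).
\]

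Part (i) is then the specialization $\tau = \tau^*$ with $c < \tau^*$; part (ii) is obtained by swapping the roles, setting $\tau = c$ and taking the rule parameter to be $\tau^*\neq c$, which gives $g_{\tau^*}^*(c) > g_c^*(c)$. I do not foresee a serious obstacle once the Bayes-risk viewpoint is adopted: the only step requiring any real computation is the symmetrization that rewrites $g_c^*(\tau)$ as $r(\hat{\delta}_c,\pi_\tau)$, after which no further analysis of $\rho_c^*$ or its derivatives is needed, and strict Bayes optimality of $\hat{\delta}_\tau$ for $\pi_\tau$ delivers both inequalities in a single stroke.
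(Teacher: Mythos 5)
Your proposal is correct, and it takes a genuinely different route from the paper. You identify $g_c^*(\tau)=r(\hat{\delta}_c,\pi_\tau)$ via the symmetry $\hat{\delta}_c(-y)=1-\hat{\delta}_c(y)$ (both computations check out: $R_{sq}(\hat{\delta}_c,\tau)=R_{sq}(\hat{\delta}_c,-\tau)=\tau^2\rho_c^*(\tau)$), and then both inequalities follow from strict uniqueness of the pointwise Bayes minimizer $\phi(y-\tau)/(\phi(y-\tau)+\phi(y+\tau))=\hat{\delta}_\tau(y)$, since $\hat{\delta}_c$ and $\hat{\delta}_\tau$ differ on a set of positive measure whenever $c\neq\tau$. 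The paper instead proves each part by a monotonicity argument: it computes $\partial g_c^*(\tau^*)/\partial c$ (respectively $\partial g_s^*(c)/\partial s$) as an integral against a kernel, invokes the sign-change/variation-diminishing result \citet[Theorem C.1]{kitagawa2022treatment} to show this derivative has a unique sign change at $\tau=c$ (respectively $c=s$), and concludes it is strictly negative (respectively positive) on the relevant range. Your argument is shorter, avoids any differentiation of $\rho_c^*$ and the Theorem C.1 machinery, and in fact delivers the stronger conclusion $g_c^*(\tau)>g_\tau^*(\tau)$ for every $c\neq\tau$ with $c,\tau>0$, of which (i) and (ii) are the two specializations needed in Lemma \ref{lem:8}. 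What the paper's calculus approach buys in exchange is the full monotonicity of $c\mapsto g_c^*(\tau^*)$ and $s\mapsto g_s^*(c)$ on the stated intervals, which is more information than the lemma requires but is of the same flavor as the other sign-change arguments used throughout the appendix. The one point worth making explicit in your write-up is the measure-theoretic step behind strictness (the two logistic rules coincide only at $y=0$), but that is immediate.
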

\begin{proof}
Recall the definition of $g^*_{a}(b)$:
\[
g_{a}^{*}(b):=b^{2}\int\left(\frac{1}{\exp\left(2\cdotp a\cdotp y\right)+1}\right)^{2}\phi\left(y-b\right)dy.\]
Statement (i). Viewing $g_{c}^{*}(\tau^{*})$ as a function of $c$,
we aim to establish that $\frac{\partial\left(g_{c}^{*}\right)(\tau^{*})}{\partial c}<0$
for all $0<c<\tau^{*}$, and statement (i) will follow directly. For
all $0<c<\tau^{*}$:
\begin{align*}
\frac{\partial\left(g_{c}^{*}\right)(\tau^{*})}{\partial c} & =-4\left(\tau^{*}\right)^{2}\int\frac{\exp(2cy)y}{\left(\exp(2cy)+1\right)^{3}}\phi(y-\tau^{*})dy\\
 & =-4\left(\tau^{*}\right)^{2}\int\frac{\phi^{2}(y+c)\phi(y-c)}{\left(\phi(y+c)+\phi(y-c)\right)^{3}}y\phi(y-\tau^{*})dy.
\end{align*}
To show $\frac{\partial\left(g_{c}^{*}\right)(\tau^{*})}{\partial c}<0$
for all $0<c<\tau^{*}$, fix each $0<c<\tau^*$. We now study  how $\frac{\partial\left(g_{c}^{*}\right)(\tau)}{\partial c}$
changes as a function of $\tau$. We can apply \citet[Theorem C.1, ][]{kitagawa2022treatment} to conclude that $\frac{\partial\left(g_{c}^{*}\right)(\tau)}{\partial c}$
has at most one sign change (as a function of $\tau$), as $\frac{\phi^{2}(y+c)\phi(y-c)}{\left(\phi(y+c)+\phi(y-c)\right)^{3}}y$
has one sign change from $-$ to $+$ as a function of $y$. Furthermore,
note 
\[
\frac{\partial\left(g_{c}^{*}\right)(\tau)}{\partial c}\mid_{\tau=c}=0,
\]
and we may verify 
\begin{align*}
\left(\frac{\partial\left(g_{c}^{*}\right)(\tau)}{\partial c\partial\tau}\right)\mid_{\tau=c} & =-4c^{2}\int\frac{\phi^{2}(y+c)\phi(y-c)}{\left(\phi(y+c)+\phi(y-c)\right)^{3}}y\phi(y-c)(y-c)dy\\
 & =-4c^{2}\int\frac{\phi^{2}(y+c)\phi(y-c)}{\left(\phi(y+c)+\phi(y-c)\right)^{3}}y^{2}\phi(y-c)dy<0,
\end{align*}
implying $\tau=c$ is indeed a point of sign change of $\frac{\partial\left(g_{c}^{*}\right)(\tau)}{\partial c}$.
Applying \citet[Theorem C.1, ][]{kitagawa2022treatment}, we conclude that $\frac{\partial\left(g_{c}^{*}\right)(\tau)}{\partial c}$ (as a function of $\tau$)
is first positive and then negative with one unique sign change at
$\tau=c$. As $\tau^{*}>c$, we conclude $\frac{\partial\left(g_{c}^{*}\right)(\tau^{*})}{\partial c}=\frac{\partial\left(g_{c}^{*}\right)(\tau)}{\partial c}\mid_{\tau=\tau^{*}}<0$
for all $0<c<\tau^{*}$. Statement (i) follows.

Statement (ii). The proof is similar. Viewing $\left(g_{s}^{*}\right)(c)$
as a function of $s$, we aim to show that $\frac{\partial\left(g_{s}^{*}\right)(c)}{\partial s}>0$
for all $c<s<\tau^{*}$. Algebra shows
\begin{align*}
\frac{\partial\left(g_{s}^{*}\right)(c)}{\partial s} & =-4c^{2}\int\frac{\phi^{2}(y+s)\phi(y-s)}{\left(\phi(y+s)+\phi(y-s)\right)^{3}}y\phi(y-c)dy.
\end{align*}
Now fix each $c<s<\tau^*$. Viewing $\frac{\partial\left(g_{s}^{*}\right)(c)}{\partial s}$
as a function of $c$, we can conclude that it has at most one sign
change by applying \citet[Theorem C.1, ][]{kitagawa2022treatment}. As
\[
\frac{\partial\left(g_{s}^{*}\right)(c)}{\partial s}\mid_{c=s}=0
\]
 and 
\begin{align*}
\frac{\partial\left(g_{s}^{*}\right)(c)}{\partial s\partial c}\mid_{c=s} & =-4s^{2}\int\frac{\phi^{2}(y+s)\phi(y-s)}{\left(\phi(y+s)+\phi(y-s)\right)^{3}}y\phi(y-s)(y-s)dy\\
 & =-4s^{2}\int\frac{\phi^{2}(y+s)\phi(y-s)}{\left(\phi(y+s)+\phi(y-s)\right)^{3}}y^{2}\phi(y-s)dy\\
 & <0.
\end{align*}
Therefore, $\frac{\partial\left(g_{s}^{*}\right)(c)}{\partial s}$
indeed has one unique sign change from positive to negative (as a function of $c$). As $\frac{\partial\left(g_{s}^{*}\right)(s)}{\partial s}=0$,
we conclude that 
\[
\frac{\partial\left(g_{s}^{*}\right)(c)}{\partial s}>0
\]
 for all $c<s<\tau^*$. Thus, statement (ii) follows.
\end{proof}

\bibliographystyle{ecta}
\bibliography{bibliography}

\end{document}